\journal{}
\newtheorem{definition}{Definition}
\newtheorem{theorem}{Theorem}
\newtheorem{proof}{Proof}
\begin{document}

\begin{frontmatter}

\title{Application of the novel fractional grey model FAGMO(1,1,$k$) to predict China's
nuclear energy consumption \tnoteref{mytitlenote}}

\tnotetext[mytitlenote]{https://doi.org/10.1016/j.energy.2018.09.155}

 \author{Wenqing Wu \fnref{label1}}
 \address[label1]{School of Science, Southwest University of
                  Science and Technology, 621010, Mianyang, China}
 \author[label1]{Xin Ma \corref{cor1}\fnref{label2}}
 \cortext[cor1]{Corresponding author: Xin Ma}
 \ead{cauchy7203@gmail.com}
 \address[label2]{State Key Laboratory of Oil and Gas Reservoir Geology and Exploitation,
                  Southwest Petroleum University, 610500, Chengdu, China}
  \author{Bo Zeng \fnref{label3}}
  \address[label3]{College of Business Planning,
                   Chongqing Technology and Business
                   University, 400067, Chongqing, China}
                   \author{Yong Wang \fnref{label4,label2}}
  \address[label4]{School of Science, Southwest Petroleum
  University, 610500, Chengdu, China}
  \author{Wei Cai \fnref{label5}}
  \address[label5]{College of Engineering and Technology,
                   Southwest University, 400715, Chongqing, China}
\begin{abstract}
At present, the energy structure of China is shifting towards
cleaner and lower amounts of carbon fuel, driven by environmental needs and
technological advances. Nuclear energy, which is one of the major
low-carbon resources, plays a key role in China's clean energy
development. To formulate appropriate energy policies, it is necessary
to conduct reliable forecasts. This paper discusses the nuclear energy
consumption of China by means of a novel fractional grey model
FAGMO(1,1,$k$). The fractional accumulated generating matrix is
introduced to analyse the fractional grey model properties.
Thereafter, the modelling procedures of the FAGMO(1,1,$k$) are presented in
detail, along with the transforms of its optimal parameters. A
stochastic testing scheme is provided to validate the accuracy and
properties of the optimal parameters of the FAGMO(1,1,$k$). Finally, this model is used to forecast China's nuclear energy
consumption and the results demonstrate that the FAGMO(1,1,$k$) model
provides accurate prediction, outperforming other grey
models.
\end{abstract}
\begin{keyword}
Nuclear energy consumption\sep Grey system\sep Fractional order
accumulation\sep Optimised parameter\sep Energy forecasting
\end{keyword}
\end{frontmatter}

\section{Introduction}
\label{sec1:intro}

Energy is the most important strategic resource and provides a key
material basis for economic development and social progress. Energy
consumption prediction constitutes an important aspect of energy
policies for countries globally, particularly developing countries
such as China, where the energy consumption structure is changing at
a rapid speed. Numerous models have been introduced for forecasting
energy consumption, such as dynamic causality analysis
\cite{Mirza2017EnergyRSER}, nonlinear and asymmetric analysis
\cite{Shahbaz2017EnergyEE}, time-series analysis
\cite{Bekhet2017CO2RSEE, Amri2017TheRSEE}, machine learning models
\cite{fan2018comparison}, the coupling mathematical model
\cite{Wang2017TransientJPM,Wang2018FlowIJNSNS,Hu2018streamlineSPEPO},
autoregressive distributed lag model \cite{Brini2017RenewableRSEE},
hybrid forecasting system
\cite{Du2018MultistepRE,Cai2018developmentE}, machining system
\cite{Cai2018EnergyE}, fuzzy systems \cite{Wu2017TopologicalIJBC},
LEAP model \cite{schnaars1987houLRRP,Dong2017APS}, TIMES model
\cite{Shi2016modellingAE,Zhang2016timesAE}, NEMS model
\cite{Gabriel2001theOR, Soroush2017aIJSSD} and grey model
\cite{zeng2018forecasting,zeng2017self,Feng2012ForecastingESPBEPP,
Cui2013AAPM,duan2018forecasting,Zeng2018ForecastingE,Zeng2018improvedCIE,Wu2018usingE,Wang2018AnASC}.
Among these prevalent methods, simple linear regression,
multivariate linear regression, and time-series analysis are often
significant in accurately demonstrating the phenomena of long-term
trends. However, these exhibit the limitations of requiring a large
amount of observed data, at least 50 or more sets, to construct
models. The computational intelligence method requires a substantial
amount of training data to derive the optimised parameters. However,
in many practical situations, it is very difficult and sometimes
even impossible to obtain complete information. Therefore, it is
important to identify a favourable method for forecasting the trend
of an analysed system using scarce information with less errors.

The grey forecasting theory, proposed by Professor Deng
\cite{Deng1982ControlSCL}, offers a feasible and efficient method
for dealing with uncertain problems containing poor information. The
main advantage of this theory is that only four or more samples are
required to describe the behaviour and evolution of the analysed
system. In Deng's pioneering work, the first-order one variable grey
model GM(1,1) was discussed in detail. Over three decades of
development, the classical continuous GM(1,1) model has been studied
extensively; for example, by Xie et al.
\cite{Xie2009DiscreteAMM,Xie2013OnAPMAMM,Xie2017AJGS}, Wang et al.
\cite{Wang2017ForecastingJCP,Wang2017GreyAMM,Wang2017DecompositionEP},
Ma et al.
\cite{Ma2017ApplicationJCAAM,Ma2017ACNSNS,Ma2018ThekernelAMM,
Ma2018PredictingNCA} and others. However, we note that these
generalised grey models all include integer-order accumulation,
which results in less flexibility in time-series forecasting. Thus,
the fractional-order accumulation grey model is considered in this
paper.

By extending the integer accumulated generating operation into the
fractional accumulated generating operation, Wu et al.
\cite{Wu2013GreyCNSNS} first proposed the fractional accumulation
GM(1,1) model known as the FAGM(1,1) model. The computational
results demonstrated that the novel model outperformed the
conventional GM(1,1) model. Later, Wu and his peers successfully
applied fractional accumulation to the fuel production of China
\cite{Wu2014NonNCA}, tourism demand \cite{Wu2015UsingSC} and
electricity consumption \cite{Wu2015PropertiesAMAC}. Subsequently,
Xiao et al. \cite{Xiao2014TheAMM} studied the GM(1,1) model, in
which they regarded the fractional accumulated generator matrix as a
type of generalised accumulated generating operation. Gao et al.
\cite{Gao2015EstimationJGS} presented a new discrete fractional
accumulation GM(1,1) model known as FAGM(1,1,D) and applied it to
China's CO$_2$ emissions. Mao et al. \cite{Mao2016AAMM} investigated
a novel fractional grey model FGM($q$,1). Interested readers can
refer to \cite{Shen2014OptimizationJGS,
Yang2016ContinuousE,Yang2017ModifiedJAS,Liu2017NonJIM} for further
details on fractional accumulation grey models.

A further significant issue in grey system theory is that the
solution applied for prediction does not match the grey difference
equation. In 2009, Kong and Wei \cite{Kong2009OptimizationJGS}
proposed a parameter optimisation technique to study the DGM(2,1)
model. Later, Chen et al. used a similar technique to improve the
GM(1,1) \cite{Yu2017ApplyingJGS} and ONGM(1,1) models
\cite{Chen2014FoundationMPE}, in which the basic structure of the
original models remain in the optimised ones. Recently, Ma and Liu
\cite{Ma2016PredictingJCTN} studied the exact non-homogeneous grey
prediction model (ENGM) with an exact basic equation and background
value. Thereafter, Ma and Liu \cite{Ma2017TheJGS} considered the
GMC(1,$n$) model with optimised parameters and applied it to
forecasting the urban consumption per capita and industrial power
consumption of China. Following the concept of fractional
accumulation and the parameter optimisation method, we propose a
novel FAGMO(1,1,$k$) model.

In this paper, we study the nuclear energy consumption of China by
means of the FAGMO(1,1,$k$) model. The computational results
indicate that the proposed grey model outperforms the existing ENGM
model, optimised non-homogeneous grey model abbreviated as the
ONGM(1,1,$k$) model, FAGM(1,1) model and FAGM(1,1,$k$) model. The
main contributions of our paper are listed below. 1) A fractional
accumulation grey model with optimised parameters is developed. 2)
Detailed properties of optimised parameters are studied according to
two theorems. These indicate that the first parameter is the most
important factor affecting the accuracy of the FAGM(1,1,$k$) model.
3) Simulation results and two practical cases are considered to
assess the effectiveness of the FAGMO(1,1,$k$) model compared to
other models. 4) The FAGMO(1,1,$k$) grey forecasting model is
implemented to forecast the nuclear energy consumption of China. It
is demonstrated in the results that the newly proposed model offers
higher precision than other grey models.

The remainder of this paper is organised as follows. Section
\ref{sec:compendium} provides a compendium of China's energy
consumption. Section \ref{sec:Preliminaries} discusses several
preliminaries. A detailed discussion of the FAGM(1,1,$k$) model is
provided in section \ref{sec:FAGM}. Section \ref{sec:params} discusses
the optimised parameters. Modelling evaluation criteria and detailed
steps are provided in section \ref{sec:criteria}. Section
\ref{sec:validation} discusses the validation of the FAGMO(1,1,$k$)
model. Applications are explained in section \ref{sec:appli} and
conclusions are drawn in the final section.
 \section{Brief overview of China's energy
consumption} \label{sec:compendium}

This section presents a systematic and comprehensive investigation
of China's energy consumption using five fuels, namely coal, oil,
natural gas, nuclear energy and renewables. In China, renewables
include hydroelectricity, wind, solar, geothermal, biomass and
others. According to the statistical data of British Petroleum (BP)
{\it Statistical Review of World Energy 2018} (www.bp.com/
statisticalreview), the International Energy Agency (IEA) {\it World
Energy Outlook 2017} (www.iea.org/weo2017), Asia-Pacific Economic
Cooperation (APEC) {\it Energy Overview 2017}
(www.apec.org/Publications), and National Bureau of Statistics of
China (NBS) {\it China Statistical Yearbook 2017}
(www.stats.gov.cn/tjsj/ndsj), China's primary energy consumption
increased from 142.9 million tonnes oil equivalent (Mtoe) in the
first year of the third Five-Year Plan (1966 to 1970) to 3132.2 Mtoe
in the second year of the $13^{\rm th}$ Five-Year Plan (2016 to
2020), and increased dramatically since the turn of the millennium
owing to continuous economic growth. According to the statistical
data of BP, China's primary energy consumption from 1966 to 2017 is
plotted in Fig. \ref{fig:total-consumption}.
\begin{figure}[!htbp]
\centering
\includegraphics[height=7.5cm,width=11.5cm]{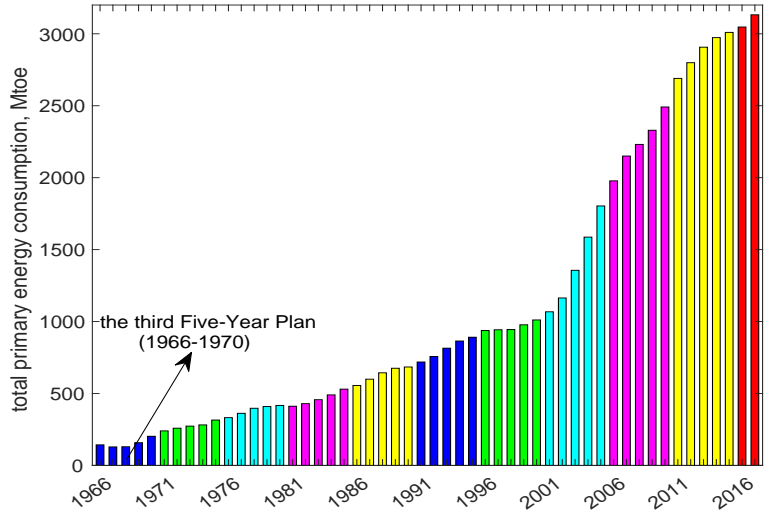}
\caption{Total primary energy consumption of China from 1966 to 2017}
 \label{fig:total-consumption}
\end{figure}

It is well known that China is the world's largest energy consumer,
accounting for 23\% and 23.2\% of the global energy consumption in
2016 and 2017, respectively. While coal remains the dominant fuel,
its share of total energy consumption was 62\% in 2016 and 60.4\% in
2017. China's $13^{\rm th}$ Five-Year Plan set an ambitious target
for adjusting the primary energy consumption structure. The energy
plan set by China for the $13^{\rm th}$ Five-Year Plan can met the
adjustment target of the primary energy consumption structure. A
brief overview of China's primary energy consumption from the
perspective of five fuel types is provided below.
\subsection{Coal}
\label{subsec:coal}

Since the foundation of the People's Republic of China, coal has
always been the primary energy fuel, owing to abundant domestic
reserves and its low cost \cite{Dong2017APS}. From Figs.
\ref{fig:value-consumption} and \ref{fig:percentage-consumption}, it
can be observed that coal soared from 122.4 Mtoe in 1966 to 1892.6
Mtoe in 2017, although the percentage of coal in the total primary
energy consumption decreased from 85.7\% in 1966 to 60.4\% in 2017.
Specifically, despite a continuous increase in coal consumption
during the third and fourth Five-Year Plan periods, the proportion
of coal in the total primary energy consumption has decreased from
85.7\% to 72.5\%. Following China's Reform and Opening-Up Policy in
1978, the coal consumption has expanded rapidly from 282.8 Mtoe in
1978 to 1685.8 Mtoe in 2010, while the share of coal in the total
primary energy consumption is around at 73.8\%. At the beginning of
the $12^{\rm th}$ Five-Year Plan (2011 to 2015), the Chinese
government has been stepping up its efforts to reduce coal
consumption to deal with air pollution and climate change. The
``supply-side reform" removes unnecessary and out-dated production
capacity to avoid supply overcapacity in the coal mining industry.
During this period, the consumption of coal decreased from 1903.9
Mtoe in 2011 to 1892.6 Mtoe in 2017, while the share of coal
decreased from 70.8\% to 60.4\%.

\subsection{Oil}
\label{subsec:oil}

Oil is a major component of primary energy resources globally and
plays a strategic role in economic growth. It can be observed in
Figs. \ref{fig:value-consumption} and
\ref{fig:percentage-consumption} that the oil consumption in China
increased from 14.3 Mtoe in 1966 to 608.4 Mtoe in 2017, with an
average annual growth rate of 7.5\%. Owing to China's oil reserves
accounting for only 2\% of the global amount, China is highly
dependent on overseas oil imports of more than 60\%. China became a
net importer of crude oil in 1993 and the world's second largest oil
consumer in 2002 \cite{Dong2017APS}. In April 2015, China surpassed
the US as the world's largest oil importer, with imports of 7.4
million barrels per day (Mbbl/D), thereby exceeding the US imports
of 7.2 Mbbl/D. Following the $11^{\rm th}$ Five-Year Plan, the oil
consumption increased rapidly owing to economic growth and the
improved quality of life.
\begin{figure}[!htbp]
\centering
\includegraphics[height=7.5cm,width=11cm]{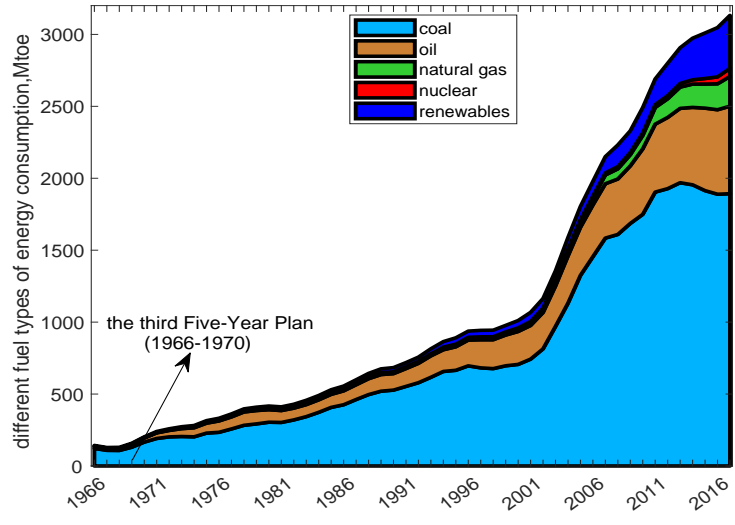}
\caption{China's primary energy consumption under fuel types}
 \label{fig:value-consumption}
\end{figure}
\begin{figure}[!htbp]
\centering
\includegraphics[height=7.5cm,width=11cm]{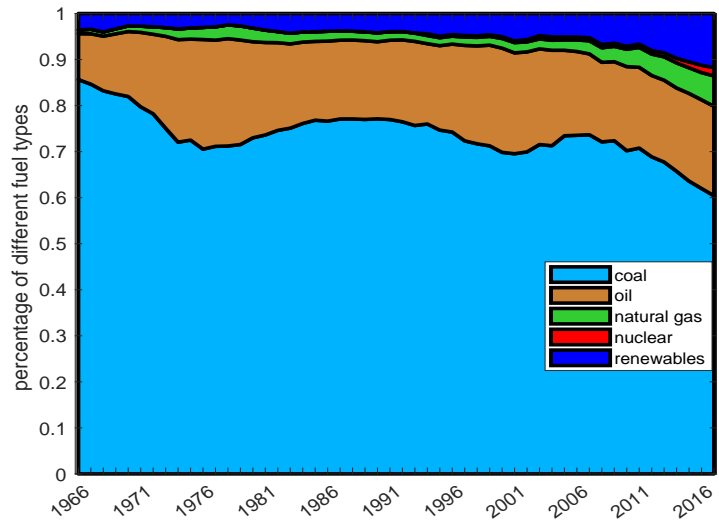}
\caption{Percentage of China's primary energy consumption under fuel
types}
 \label{fig:percentage-consumption}
\end{figure}

\subsection{Natural gas}
\label{subsec:gas}

Natural gas is a fossil fuel for electricity generation, chemical
feedstock, heating and cooking, among others. Chinese organisations
have estimated that the technically and ultimately recoverable
resources of natural gas are 6.1 trillion cubic meters (tcm) and 37
tcm \cite{Hou2015unconventionalEES}, respectively. However, natural
gas has not become a major energy resource in China because the
domestic natural gas industry has developed slowly. In recent years,
the Chinese government has set the stable natural gas supply as one
of the country's energy strategies and encourages gas transportation
from areas with significant resources to East China. The National
Development and Reform Commission constructed three west-east gas
pipelines in 2004, 2007 and 2015, respectively. Furthermore, the
``shifting from coal to gas" policy has a significant impact on the
natural gas market. The natural gas consumption has increased from
116.2 Mtoe in 2011 to 206.7 Mtoe in 2017, with an average annual
growth rate of 8.6\%.

\subsection{Nuclear energy}
\label{subsec:nuclear}

Nuclear energy is almost always used to generate electricity. To
reduce the air pollution from coal-fired power plants, nuclear
energy is an inevitable strategic option for China. In fact, China
began to develop nuclear energy in the 1980s and the Qinshan Nuclear
Power Plant began operating in 1991. In 2012, the State Council set
a goal of 58 GW nuclear capacity by 2020. At the beginning of the
$13^{\rm th}$ Five-Year Plan, 38 nuclear power reactors were in
operation with a production of 213.3 TWh, while 19 nuclear power
reactors were under construction. At present, the Chinese government
focuses on fourth-generation reactors with increased safety. From
the $11^{\rm th}$ Five-Year Plan (2006 to2010), nuclear energy
consumption has soared rapidly from 12.4 Mtoe in 2006 to 56.2 Mtoe
in 2017, with an average annual growth rate of 13.4\%.

\subsection{Renewables}
\label{subsec:renawables}

China's renewable energy has been expanding rapidly in recent
decades, owing to the development of the modern renewable energy
industry. In 2017, China's renewables consumption accounted for
21.9\% of the total global amount, increasing by 31\% and accounting
for 36\% of the global renewables consumption growth. Meanwhile, the
renewables consumption increased from 101.1 Mtoe in 2006 to 368.3
Mtoe in 2017; the share has increased from 5.1\% to 11.8\% with an
average annual growth rate of 11.4\%. The $13^{\rm th}$ Five-Year
Plan set targets for an installed wind power generation capacity of
250 GW, solar power generation capacity of 110 GW, and hydropower
generation capacity of 350 GW by 2020.

In summary, China's primary energy consumption using five fuels for
the period of 1966 to 2017 can be provided below. The coal consumption
has gradually declined, the oil consumption has gradually increased,
and the natural gas, nuclear energy and renewables have rapidly
increased. China's primary energy consumption structure exhibits a
diversified trend, and the clean energy has increased yearly.


\section{Definitions and properties of fractional accumulation}
\label{sec:Preliminaries}

This section provides the fractional accumulated generating
operation (AGO), which can reduce the randomness of raw data in grey
theory. Correspondingly, the inverse operation of accumulated
generation is known as the inverse accumulated generating operation
(IAGO). The $r^{\rm th}$ AGO and $r^{\rm th}$ IAGO are provided
below, which can be found in paper
\cite{Wu2013GreyCNSNS,Mao2016AAMM}.
\begin{definition}\label{def:ago}
Let $X^{\left( 0 \right)}  = \left\{ {x^{\left( 0 \right)}
(1),x^{\left( 0 \right)} (2), \ldots,x^{\left( 0 \right)} (n)}
\right\}$ be an original sequence and $X^{\left( r \right)} \left(
{r > 0} \right)$ be the $r^{\rm th}$ accumulated generating
operation ($r$-AGO) sequence of $X^{\left( 0 \right)}$, where
$x^{\left( r \right)} (k) = \sum\limits_{i = 1}^k {x^{\left( {r - 1}
\right)} (i)},k = 1,2, \ldots, n$. Denote by $A^r$ the $r$-AGO
matrix that satisfies $X^{\left( r \right)} = X^{\left( 0 \right)}
A^r$, and
 \begin{eqnarray}
A^r = \left( {\begin{array}{*{20}c}
          \left[ {_{0}^{r} } \right]
      &   \left[ {_{1}^{r} } \right]
      &   \left[ {_{2}^{r} } \right]
     &  \cdots
     & \left[ {_{n - 1}^{\left.{\rule{4pt}{0mm}}\right. r} } \right] \\   
               0
     &  \left[ {_{0}^{r} } \right]
     &  \left[ {_{1}^{r} } \right]
     &      \cdots
     & \left[ {_{n - 2}^{\left.{\rule{4pt}{0mm}}\right. r} } \right] \\ 
             0
     &       0
     & \left[ {_{0}^{r} } \right]
     &  \cdots
     & \left[ {_{n - 3}^{\left.{\rule{4pt}{0mm}}\right. r} } \right] \\ 
    \vdots  &  \vdots  &  \vdots  &  \ddots  &  \vdots   \\ 
      0 &    0 &    0    &  \cdots
      & \left[ {_{0}^{r} } \right] \\ 
     \end{array}} \right)_{n \times n}, \nonumber
\end{eqnarray}
with $ \left[ {_{i}^{r} } \right]
  = \frac{{r \left( {r + 1} \right) \cdots
        \left( {r + i - 1} \right)}}{{i!}}
        ={r+i-1 \choose i}=\frac{(r+i-1)!}{i!(r-1)!},\
 \left[ {_{i}^{0} } \right]= 0,\ \left[ {_{0}^{0} } \right]
        = {0 \choose 0}= 1. $
\end{definition}

Obviously, the 1-AGO sequence $x^{( 1)} (k) = \sum\limits_{i = 1}^k
{x^{( 0 )} (i)}, k = 1, 2, \ldots, n$, namely $X^{\left( 1 \right)}
= X^{\left( 0 \right)} A$ with $A = \left( {\begin{array}{*{20}c}
   1 & 1 & 1 &  \cdots  & 1  \\
   0 & 1 & 1 &  \cdots  & 1  \\
   0 & 0 & 1 &  \cdots  & 1  \\
    \vdots  &  \vdots  &  \vdots  &  \ddots  &  \vdots   \\
   0 & 0 & 0 &  \cdots  & 1  \\
\end{array}} \right)_{n \times n}$.
\begin{definition}\label{def:iago}
The inverse accumulated generation is defined as $x^{\left( {r - 1}
\right)} (k) = x^{\left( r \right)} (k) - x^{\left( r \right)} (k -
1),k = 1,2, \ldots,n$. Denote by $D^r$ the $r^{\rm th}$ inverse
accumulated generating operation ($r$-IAGO) matrix, which satisfies
$X^{\left( 0 \right)} = X^{\left( r \right)} D^r$, and
\begin{eqnarray}
D^r  = \left( {\begin{array}{*{20}c}
            \left[ {^{-r}_{\left.{\rule{2pt}{0mm}}\right. 0} } \right]
      &     \left[ {^{-r}_{\left.{\rule{2pt}{0mm}}\right. 1} } \right]
      &     \left[ {^{-r}_{\left.{\rule{2pt}{0mm}}\right. 2} } \right] &  \cdots
     &  \left[ {_{n - 1}^{\left.{\rule{1.5pt}{0mm}}\right. -r} } \right] \\ 
     0  &  \left[ {^{-r}_{\left.{\rule{2pt}{0mm}}\right. 0} } \right]
        &  \left[ {^{-r}_{\left.{\rule{2pt}{0mm}}\right. 1} } \right]
        &  \cdots  & \left[ {_{n - 2}^{\left.{\rule{1.5pt}{0mm}}\right. -r} } \right] \\ 
    0 & 0 & \left[ {^{-r}_{\left.{\rule{2pt}{0mm}}\right. 0} } \right]   &  \cdots
    &  \left[ {_{n - 3}^{\left.{\rule{1.5pt}{0mm}}\right. -r} } \right] \\  
    \vdots  &  \vdots  &  \vdots  &  \ddots  &  \vdots   \\  
    0 & 0 & 0 &  \cdots  & \left[ {^{-r}_{\left.{\rule{2pt}{0mm}}\right. 0} } \right] \\  
      \end{array}} \right)_{n \times n}, \nonumber
\end{eqnarray}
with $\left[ {^{-r}_{\left.{\rule{2pt}{0mm}}\right. i} } \right]
     \!=\! \frac{{ - r \left( { - r + 1} \right) \cdots
           \left( { - r + i - 1} \right)}}{{i!}}
  \!=\! \left( { - 1} \right)^i
     \frac{{r \left( {r - 1} \right) \cdots
      \left( {r - i + 1} \right)}}{{i!}}
  \!\!=\!\! \left( { - 1} \right)^i {r\choose i},
 \left[ {^{-r}_{\left.{\rule{2pt}{0mm}}\right. i} } \right]\!= \!0, i\! > r. $

\end{definition}

Similarly, the 1-IAGO sequence $x^{\left( 0 \right)} (k) = x^{\left(
1 \right)} (k) - x^{\left( 1 \right)} (k - 1),k = 1,2, \ldots,n$;
that is, $X^{\left( 0 \right)} = X^{\left( 1 \right)} D$, with $D =
\left( {\begin{array}{*{20}c}
   1 & { - 1} & 0 &  \cdots  & 0  \\
   0 & 1 & { - 1} &  \cdots  & 0  \\
   0 & 0 & 1 &  \cdots  & 0  \\
    \vdots  &  \vdots  &  \vdots  &  \ddots  &  \vdots   \\
   0 & 0 & 0 &  \cdots  & 1  \\
\end{array}} \right)_{n \times n}$.

 \begin{theorem}
 \label{Th:r-qu-i}
The expression $\left[ {_i^r } \right],r \in R^ +,i \in N^ +$ is a
function of $r$ and $i$; for any value $i$,

 $r \in \left( {0,1} \right),\ \left[ {_i^r }
\right]$ is a monotonically decreasing function of $i$;

 $r = 1,\ \left[ {_i^r } \right] \equiv 1$; and

 $r \in \left( {1,+\infty} \right),\ \left[ {_i^r } \right]$ is a
monotonically increasing function of $i$.
\end{theorem}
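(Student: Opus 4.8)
The plan is to reduce the monotonicity question to a comparison of consecutive terms through their quotient. First I would record that for every $r \in R^+$ and every $i \in N^+$ the quantity
\[
\left[ {_i^r } \right] = \frac{r(r+1)\cdots(r+i-1)}{i!}
\]
is a product of strictly positive factors divided by a positive integer, hence $\left[ {_i^r } \right] > 0$. This positivity is exactly what allows me to pass from ``increasing (resp.\ decreasing) in $i$'' to the statement that the ratio $\left[ {_{i+1}^r } \right] / \left[ {_i^r } \right]$ exceeds (resp.\ is smaller than) $1$.

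Next I would compute that ratio straight from the definition: the numerator of $\left[ {_{i+1}^r } \right]$ is that of $\left[ {_i^r } \right]$ multiplied by the extra factor $r+i$, while its denominator is $(i+1)! = (i+1)\cdot i!$, so that
\[
\frac{\left[ {_{i+1}^r } \right]}{\left[ {_i^r } \right]} = \frac{r+i}{i+1} \qquad (i \in N^+).
\]
Then I would split into the three cases governed by the sign of $(r+i)-(i+1) = r-1$. If $r \in (0,1)$ then $r+i < i+1$, so the ratio is less than $1$ for every $i$, and $\left[ {_i^r } \right]$ is (strictly) monotonically decreasing in $i$. If $r \in (1,+\infty)$ then $r+i > i+1$, the ratio exceeds $1$ for every $i$, and $\left[ {_i^r } \right]$ is (strictly) monotonically increasing. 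If $r = 1$ the ratio equals $1$ for every $i$, so the sequence is constant; to identify that constant I would evaluate directly $\left[ {_i^1 } \right] = \frac{1\cdot 2 \cdots i}{i!} = 1$, giving $\left[ {_i^1 } \right] \equiv 1$.

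There is no genuinely hard step here: the argument is just the one-line ratio computation together with the elementary sign analysis of $r-1$. The only points demanding a little care are first noting the positivity of all the $\left[ {_i^r } \right]$ (so that dividing by $\left[ {_i^r } \right]$ is legitimate and preserves the direction of the inequalities), and second observing that the borderline case $r=1$ needs a direct evaluation rather than the ratio alone, in order to pin the constant value at $1$ rather than merely asserting that the sequence is constant.
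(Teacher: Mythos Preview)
Your proof is correct and follows essentially the same idea as the paper's: compare consecutive terms and reduce everything to the sign of $r-1$. The only cosmetic difference is that the paper works with the difference $\left[ {_i^r } \right] - \left[ {_{i-1}^{\ r} } \right] = \dfrac{(r+i-2)!}{i!\,(r-1)!}\,(r-1)$ rather than your ratio $\left[ {_{i+1}^{\ r} } \right] / \left[ {_i^r } \right] = (r+i)/(i+1)$; both immediately give the same trichotomy, and your ratio formulation has the mild advantage of avoiding the factorial notation $(r-1)!$ for non-integer $r$.
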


\begin{proof} We consider the difference
\begin{eqnarray}
&&\left[ {_i^r } \right]
 - \left[ {_{i -1}^{\left.{\rule{2pt}{0mm}}\right. r} } \right]
  = \left( {^{r+i-1}_{\left.{\rule{7pt}{0mm}}\right. i} } \right)
       - \left( {^{r+i-2}_{\left.{\rule{2pt}{0mm}}\right. i-1} } \right)
 = \frac{{\left( {r + i - 1} \right)!}}{{i!\left( {r - 1} \right)!}}
 - \frac{{\left( {r + i - 2} \right)!}}{{\left( {i - 1} \right)!\left( {r - 1} \right)!}} \nonumber\\
&& \hskip1.7cm= \frac{{\left( {r + i - 2} \right)!}}{{\left( {i - 1}
\right)!\left( {r - 1} \right)!}}
    \left[ {\frac{{r + i - 1}}{i} - 1} \right] \nonumber\\
&& \hskip1.7cm = \frac{{\left( {r + i - 2} \right)!}}{{\left( {i -
1} \right)!
   \left( {r - 1} \right)!}}\frac{{r - 1}}{i}\nonumber\\
&&\hskip1.7cm = \frac{{\left( {r + i - 2} \right)!}}{{ i!
   \left( {r - 1} \right)!}} (r - 1).\nonumber
\end{eqnarray}

From the difference results, we complete the proof.
\end{proof}

To gain an improved understanding of Theorem \ref{Th:r-qu-i}, two
figures are displayed in the following Fig. \ref{fig:r-three}

\begin{figure}[!htbp]
\centering
\begin{minipage}[c]{0.5\textwidth}
\centering
\includegraphics[height=5.5cm,width=6cm]{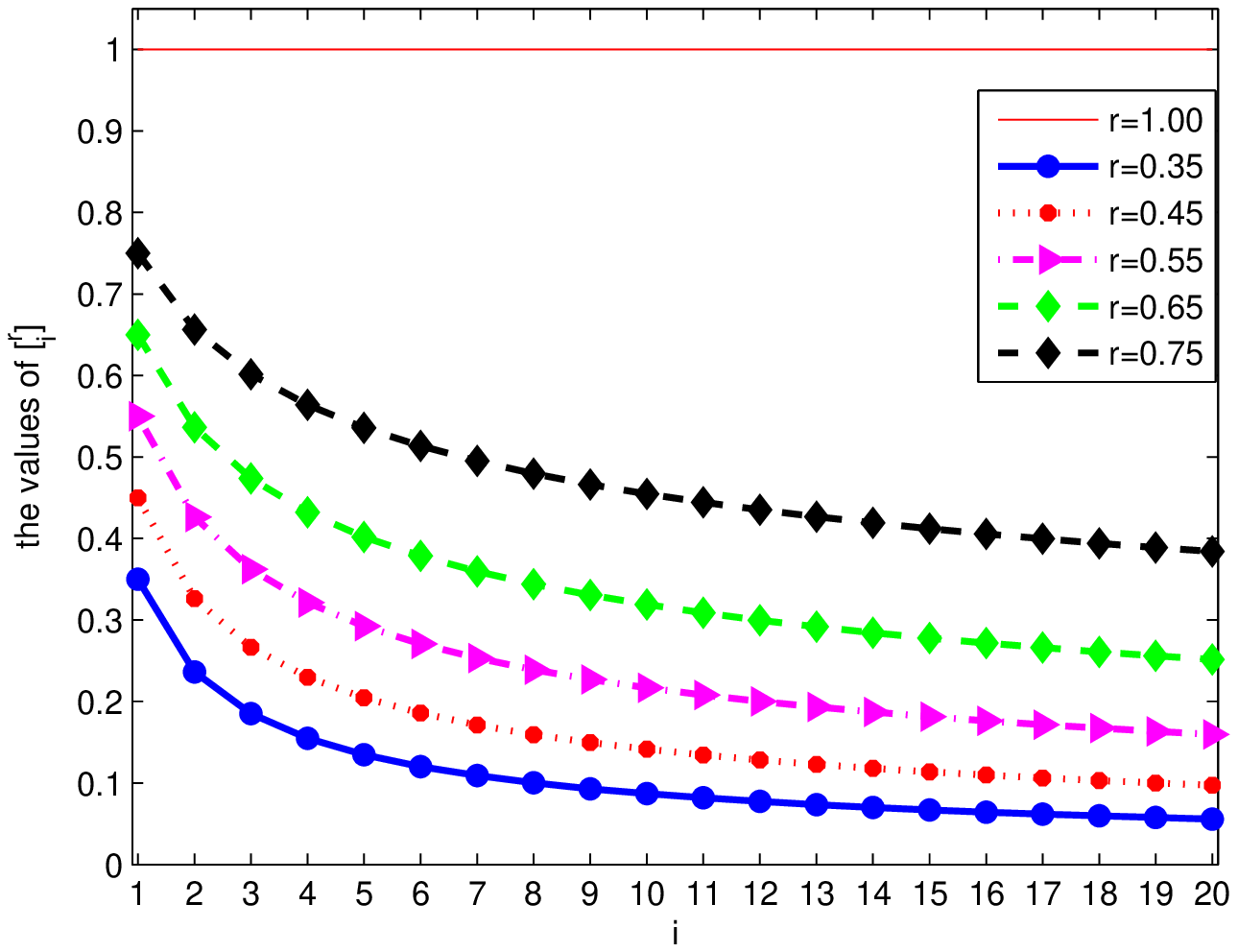}
\end{minipage}%
\begin{minipage}[c]{0.5\textwidth}
\centering
\includegraphics[height=5.5cm,width=6cm]{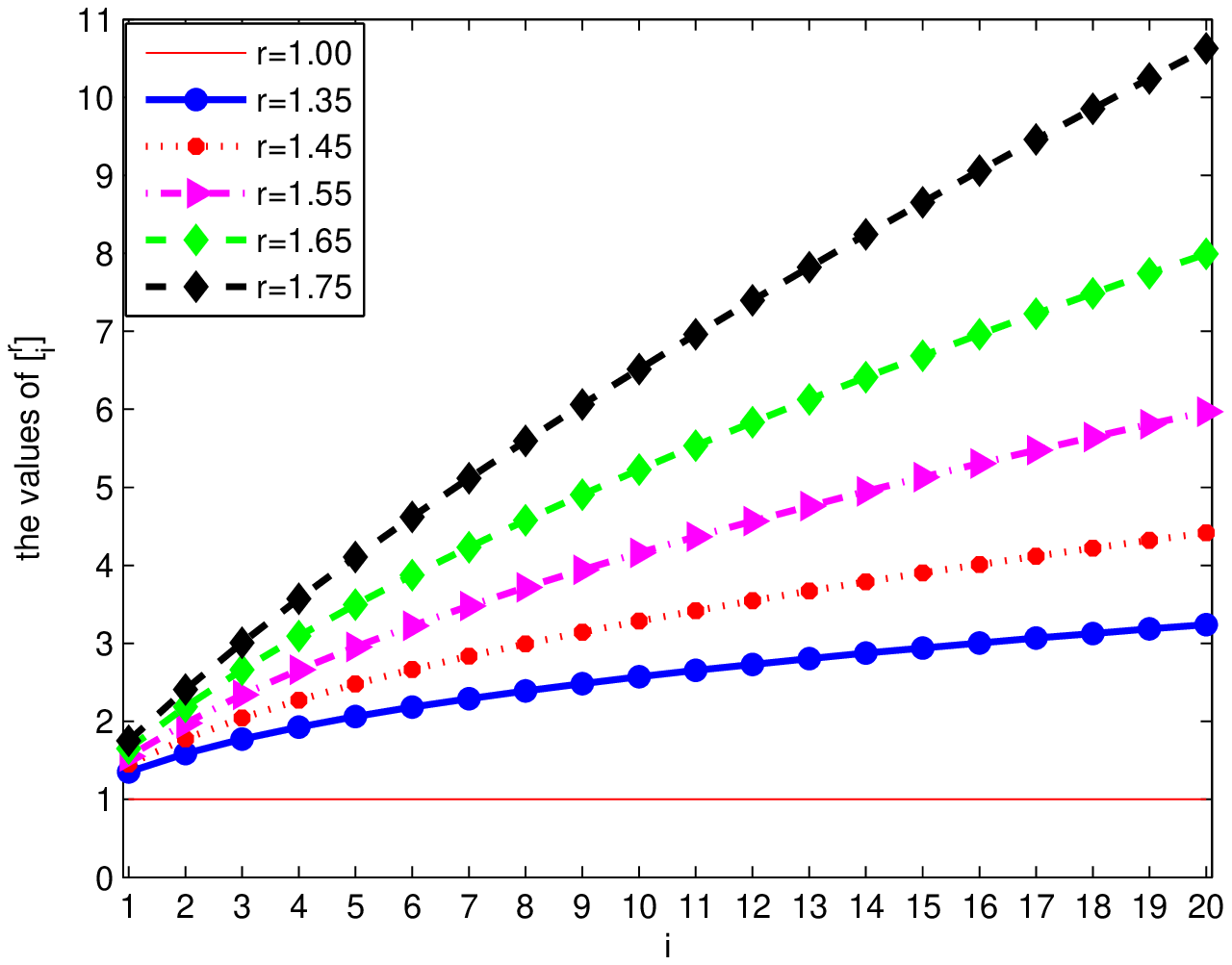}
\end{minipage}
\caption{Function $\left[ {_i^r } \right]$  versus values
$r$ and $i$: left $r \in \left( {0,1} \right)$, right $r \in \left(
{1, +\infty} \right)$}
  \label{fig:r-three}
\end{figure}

It follows from $X^{\left( r \right)}  = X^{\left( 0 \right)} A^r$
that
 \begin{eqnarray}
  x^{\left( r \right)} \left( k \right)
   = \sum\limits_{i =1}^k
   \left[ {_{k - i}^{\left.{\rule{2pt}{0mm}}\right. r} } \right]
    x^{\left( 0 \right)} (i)
   = \sum\limits_{i = 0}^{k-1} {\left[ {_i^r } \right]}
    x^{\left( 0 \right)} (k - i),
 \end{eqnarray}
which means that $x^{\left( r \right)} \left( k \right)$ is the weight of
$x^{\left( 0 \right)} (i),i = 1,2, \ldots,k$.

From Theorem \ref{Th:r-qu-i}, when $r \in \left( {0,1} \right)$, the
weight of the old data is smaller than that of the new data. When
$r=1$, the weights of the old and new data are all 1. When
$r \in \left( {1, +\infty} \right)$, the weight of the old data is
larger than that of the new data.

\begin{theorem}\label{th:ago-and-iago}
 The values of $r$-AGO $A^r$ and $r$-IAGO $D^r$ satisfy $\left( {A^r } \right)^{ - 1}  = D^r$.
\end{theorem}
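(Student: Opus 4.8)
The plan is to exploit the fact that both $A^r$ and $D^r$ are upper-triangular Toeplitz matrices, so that the product $A^r D^r$ is again upper-triangular Toeplitz and each of its entries is a finite convolution of the two defining sequences. Write $a_i = \left[ {_i^r } \right] = {r+i-1 \choose i}$ for the number on the $i$-th superdiagonal of $A^r$ and $d_i = \left[ {^{-r}_i } \right] = (-1)^i {r \choose i}$ for the number on the $i$-th superdiagonal of $D^r$, so that $(A^r)_{j,k} = a_{k-j}$ and $(D^r)_{k,\ell} = d_{\ell-k}$ whenever the indices are in range and $0$ otherwise. Multiplying out the $(j,\ell)$ entry of $A^r D^r$ gives $0$ when $\ell < j$ and $\sum_{i=0}^{\ell-j} a_i\, d_{\ell-j-i}$ when $\ell \ge j$. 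Hence the theorem reduces to the scalar convolution identity
\[
  \sum_{i=0}^{m} a_i\, d_{m-i}
   = \sum_{i=0}^{m} {r+i-1 \choose i}(-1)^{m-i}{r \choose m-i}
   = \delta_{m,0},
\]
valid for every integer $m \ge 0$ and every real $r > 0$, where $\delta_{m,0}$ is $1$ for $m=0$ and $0$ otherwise; this yields $A^r D^r = I_n$ and therefore $(A^r)^{-1} = D^r$.

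To establish the scalar identity I would invoke the two generating-function expansions $\sum_{i\ge 0} {r+i-1 \choose i} z^i = (1-z)^{-r}$ (the negative-binomial series) and $\sum_{i\ge 0}(-1)^i {r \choose i} z^i = (1-z)^{r}$ (Newton's generalized binomial theorem), both of which hold for $|z| < 1$ and arbitrary real $r > 0$. The left-hand side of the scalar identity is exactly the coefficient of $z^m$ in the Cauchy product of these two series, namely in $(1-z)^{-r}(1-z)^{r} = 1$; comparing the coefficient of $z^m$ on both sides gives $\delta_{m,0}$. A self-contained alternative is to prove the identity by induction on $m$ using Pascal's rule for each binomial factor, or to recognize it as an instance of the Chu--Vandermonde convolution, but the generating-function derivation is the shortest.

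The one point that needs genuine care — and the step I would treat most carefully — is the passage from the matrix statement to the scalar convolution. One must check that truncating the (a priori infinite) Toeplitz pattern to an $n\times n$ block does not alter the partial sums that occur: because $A^r$ and $D^r$ are both upper triangular, the $(j,\ell)$ entry of $A^r D^r$ only ever involves $a_i$ and $d_{\ell-j-i}$ with $0 \le i \le \ell-j \le n-1$, i.e. only the first $n$ terms of each sequence, and these agree with the terms entering the infinite convolution. It is also worth emphasizing that throughout one should use $d_i = (-1)^i {r \choose i}$ and not the truncated form ``$d_i = 0$ for $i > r$'', since the latter is valid only when $r$ is a positive integer, whereas here $r$ ranges over all positive reals. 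Once this bookkeeping is in place the argument closes; the same convolution also gives $D^r A^r = I_n$, since $(1-z)^{r}(1-z)^{-r} = 1$ as well.
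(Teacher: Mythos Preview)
Your proof is correct and in fact stronger than the paper's own argument. The paper proceeds by induction on $r$: it checks $AD=I$ directly for $r=1$ and then, assuming $A^m D^m=I$, writes $A^{m+1}D^{m+1}=A^m(AD)D^m=A^mD^m=I$. This is clean, but it establishes the identity only for positive \emph{integer} $r$, and it tacitly uses the semigroup relations $A^{m+1}=A^m A$ and $D^{m+1}=D\,D^m$, which follow from the recursive definition of the AGO but are not spelled out. Your generating-function route, reducing the $(j,\ell)$ entry of $A^rD^r$ to the coefficient of $z^{\ell-j}$ in $(1-z)^{-r}(1-z)^r=1$, handles every real $r>0$ in one stroke, which is exactly the generality the fractional setting of the paper requires. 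Your remark that one must use $d_i=(-1)^i{r\choose i}$ rather than the truncated rule ``$d_i=0$ for $i>r$'' is also well taken: that truncation is only sensible for integer $r$, and your argument correctly avoids relying on it.
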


\begin{proof}
From the definition of $A^r$, it is easy to calculate the determinant
$det\left( {A^r } \right) = 1$, which means that $A^r$ is reversible.

Employing mathematical induction, when $r=1$, we obtain
\begin{eqnarray}
AD = \left( {\begin{array}{*{20}c}
   1 & 1 & 1 &  \cdots  & 1  \\
   0 & 1 & 1 &  \cdots  & 1  \\
   0 & 0 & 1 &  \cdots  & 1  \\
    \vdots  &  \vdots  &  \vdots  &  \ddots  &  \vdots   \\
   0 & 0 & 0 &  \cdots  & 1  \\
\end{array}} \right)\left( {\begin{array}{*{20}c}
   1 & { - 1} & 0 &  \cdots  & 0  \\
   0 & 1 & { - 1} &  \cdots  & 0  \\
   0 & 0 & 1 &  \cdots  & 0  \\
    \vdots  &  \vdots  &  \vdots  &  \ddots  &  \vdots   \\
   0 & 0 & 0 &  \cdots  & 1  \\
\end{array}} \right) =  I. \nonumber
\end{eqnarray}

Assuming that the properties hold true when $r=m$, this means that
\begin{eqnarray}
 \left( {\begin{array}{*{20}c}
   \left[ {^{m}_{\left.{\rule{0.1pt}{0mm}}\right. 0} } \right]
    & \left[ {^{m}_{\left.{\rule{0.1pt}{0mm}}\right. 1} } \right]
    &\left[ {^{m}_{\left.{\rule{0.08pt}{0mm}}\right. 2} } \right]&  \cdots
    & \left[ {_{n -1}^{\left.{\rule{2pt}{0mm}}\right. m} } \right]  \\
   0 & \left[ {^{m}_{\left.{\rule{0.1pt}{0mm}}\right. 0} } \right]
   & \left[ {^{m}_{\left.{\rule{0.1pt}{0mm}}\right. 1} } \right] &  \cdots
   & \left[ {_{n -2}^{\left.{\rule{2pt}{0mm}}\right. m} } \right]  \\
   0 & 0 & \left[ {^{m}_{\left.{\rule{0.1pt}{0mm}}\right. 0} } \right] &  \cdots
   & \left[ {_{n -3}^{\left.{\rule{2pt}{0mm}}\right. m} } \right]  \\
    \vdots  &  \vdots  &  \vdots  &  \ddots  &  \vdots   \\
   0 & 0 & 0 &  \cdots  & \left[ {^{m}_{\left.{\rule{0.1pt}{0mm}}\right. 0} } \right]  \\
\end{array}} \right)\left(\! {\begin{array}{*{20}c}
   \left[ {^{-m}_{\left.{\rule{4pt}{0mm}}\right. 0} } \right]
   & \left[ {^{-m}_{\left.{\rule{4pt}{0mm}}\right. 1} } \right]
   & \left[ {^{-m}_{\left.{\rule{4pt}{0mm}}\right. 2} } \right] &  \cdots  & {\left[ {_{n - 1}^{ - m} } \right]}  \\
   0 & \left[ {^{-m}_{\left.{\rule{4pt}{0mm}}\right. 0} } \right]
   & \left[ {^{-m}_{\left.{\rule{4pt}{0mm}}\right. 1} } \right] &  \cdots  & {\left[ {_{n - 2}^{ - m} } \right]}  \\
   0 & 0 & \left[ {^{-m}_{\left.{\rule{4pt}{0mm}}\right. 0} } \right] &  \cdots  & {\left[ {_{n - 3}^{ - m} } \right]}  \\
    \vdots  &  \vdots  &  \vdots  &  \ddots  &  \vdots   \\
   0 & 0 & 0 &  \cdots  & \left[ {^{-m}_{\left.{\rule{4pt}{0mm}}\right. 0} } \right]  \\
\end{array}} \! \right)\! =\! I. \nonumber
\end{eqnarray}

Then, when $r=m+1$, we obtain
\begin{eqnarray}
A^{m + 1} D^{m + 1}  = A^m (AD)D^m  = A^m ID^m  = A^m D^m  = I,
\nonumber
\end{eqnarray}
so the result $\left( {A^r } \right)^{ - 1}  = D^r$ is proven.
\end{proof}
%
\section{Fractional grey FAGM(1,1,$k$) model}
\label{sec:FAGM}

\begin{definition}
 \label{def:FAGM11k}
The first-order differential equation
\begin{eqnarray}
 \label{Eq:fAGM11k}
\frac{{dx^{\left( r \right)} (t)}}{{dt}} + ax^{\left( r \right)} (t)
= bt + c,\ r > 0
\end{eqnarray}
is known as the whitening differential equation of the FAGM(1,1,$k$)
model. The parameter $a$ is a development coefficient, while $b t+c$ is
the grey action quantity.
\end{definition}

The discrete differential equation
\begin{eqnarray}
 \label{Eq:differential}
x^{\left( {r - 1} \right)} (k) + az^{\left( r \right)} (k) =
b\frac{{2k - 1}}{2} + c
\end{eqnarray}
is referred to as the basic equation of the FAGM(1,1,$k$). $x^{\left( {r -
1} \right)} (k) = x^{\left( r \right)} (k) - x^{\left( r \right)} (k
- 1)$, $z^{\left( r \right)} (k) = 0.5\left( {x^{\left( r \right)}
(k - 1) + x^{\left( r \right)} (k)} \right)$.

The least-squares estimation for $\phi  = \left( {a,b,c} \right)$ of
the FAGM(1,1,$k$) model satisfies
\begin{eqnarray}
 \label{Eq:esparams}
\phi  = \left( {B^\mathcal{T} B} \right)^{ - 1} B^\mathcal{T} Y,
\end{eqnarray}
where
\begin{eqnarray}
B = \left( {\begin{array}{*{20}c}
   { - z^{\left( r \right)} (2)} & {\frac{3}{2}} & 1  \\
   { - z^{\left( r \right)} (3)} & {\frac{5}{2}} & 1  \\
    \vdots  &  \vdots  &  \vdots   \\
   { - z^{\left( r \right)} (\nu)} & {\frac{{2\nu - 1}}{2}} & 1  \\
\end{array}} \right), \quad Y = \left( {\begin{array}{*{20}c}
   {x^{\left( {r - 1} \right)} (2)}  \\
   {x^{\left( {r - 1} \right)} (3)}  \\
    \vdots   \\
   {x^{\left( {r - 1} \right)} (\nu)}  \\
\end{array}} \right), \nonumber
\end{eqnarray}
in which $\nu$ is the number of samples used to construct the model.

\begin{theorem}
 \label{Th:responsefunction}
The time response function of the FAGM(1,1,$k$) model is
\begin{eqnarray}
 \label{Eq:responsefunction}
\hat x^{\left( r \right)} (k) = \left[ {x^{\left( 0 \right)} (1) -
\frac{b}{a} + \frac{b}{{a^2 }} - \frac{c}{a}} \right]e^{ - a(k - 1)}
+ \frac{b}{a}k - \frac{b}{{a^2 }} + \frac{c}{a},\ k = 2,3, \ldots
,n,
\end{eqnarray}
and the restored value of $\hat x^{\left( 0 \right)} (k)\ k = 2,3,
\ldots,n$ can be expressed by
\begin{eqnarray}
 \label{Eq:restoredvalues}
\hat X^{\left( 0 \right)}  = \hat X^{\left( r \right)} D^r.
\end{eqnarray}
\end{theorem}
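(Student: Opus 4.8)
The plan is to recognise (\ref{Eq:fAGM11k}) for what it is: a linear first-order ordinary differential equation with constant coefficient $a$ and affine forcing term $bt+c$, to solve it in closed form, and then to fix the single integration constant by the natural boundary condition of the grey model at $t=1$. First I would look for a particular solution of the form $x_p(t)=\alpha t+\beta$; substituting into (\ref{Eq:fAGM11k}) and matching the coefficient of $t$ with the constant term gives $\alpha=b/a$ and $\beta=c/a-b/a^{2}$. Since the homogeneous equation $\dot u+au=0$ has general solution $u(t)=Ce^{-at}$, the general solution of (\ref{Eq:fAGM11k}) is
\begin{eqnarray}
x^{(r)}(t)=Ce^{-at}+\frac{b}{a}t+\frac{c}{a}-\frac{b}{a^{2}}. \nonumber
\end{eqnarray}

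Next I would determine $C$. Because $\left[ {}_{0}^{r} \right]=1$, Definition \ref{def:ago} yields $x^{(r)}(1)=\left[ {}_{0}^{r} \right]x^{(0)}(1)=x^{(0)}(1)$, so the model is anchored by the condition $x^{(r)}(1)=x^{(0)}(1)$. Imposing this at $t=1$ gives
\begin{eqnarray}
C=e^{a}\left[x^{(0)}(1)-\frac{b}{a}+\frac{b}{a^{2}}-\frac{c}{a}\right], \nonumber
\end{eqnarray}
and writing $Ce^{-at}=\bigl[x^{(0)}(1)-b/a+b/a^{2}-c/a\bigr]e^{-a(t-1)}$, then evaluating at $t=k$, reproduces (\ref{Eq:responsefunction}) verbatim up to a reordering of the three constant groups $b/a$, $b/a^{2}$, $c/a$. (As a sanity check, setting $k=1$ in that expression collapses it back to $x^{(0)}(1)$, which is why the formula is stated only for $k\ge 2$ and why it is consistent with the anchoring condition.)

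Finally, the restored sequence follows immediately: the fitted accumulated sequence $\hat X^{(r)}=\bigl(\hat x^{(r)}(1),\ldots,\hat x^{(r)}(n)\bigr)$, with $\hat x^{(r)}(1)=x^{(0)}(1)$, is mapped back to the original scale by the $r$-IAGO, so Definition \ref{def:iago} — equivalently the identity $(A^{r})^{-1}=D^{r}$ of Theorem \ref{th:ago-and-iago} — gives $\hat X^{(0)}=\hat X^{(r)}D^{r}$, which is (\ref{Eq:restoredvalues}). I do not expect a serious obstacle here: the hardest point, and it is not really hard, is to justify the $t=1$ anchoring condition and the (standard, but implicit) choice to feed the continuous whitening solution the parameters $a,b,c$ estimated from the discrete basic equation (\ref{Eq:differential}) via (\ref{Eq:esparams}); everything else is the routine algebra of solving a linear ODE and collecting constants.
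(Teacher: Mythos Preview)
Your proof is correct and reaches the same result as the paper, but the route differs in a small technical way. You solve the linear ODE by the textbook decomposition into a particular solution (ansatz $x_p(t)=\alpha t+\beta$) plus the homogeneous solution $Ce^{-at}$, then fix $C$ via $x^{(r)}(1)=x^{(0)}(1)$. The paper instead makes the substitution $u(t)=-ax^{(r)}(t)+bt+c$, which converts the equation into the simpler ODE $u'=-au+b$, integrates that, and only afterwards backs out $x^{(r)}(t)$ and applies the same initial condition at $t=1$. Your approach is the more direct and arguably cleaner one; the paper's substitution buys a one-line reduction to a constant-forcing equation at the cost of an extra change of variables and back-substitution. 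Your explicit justification of the anchoring condition $x^{(r)}(1)=x^{(0)}(1)$ from $\left[{}_{0}^{r}\right]=1$ is a nice touch that the paper leaves implicit.
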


\begin{proof}
From Eq. (\ref{Eq:fAGM11k}), we have
\begin{eqnarray}
 \label{Eq:fromEq1}
\frac{{dx^{\left( r \right)} (t)}}{{dt}} =  - ax^{\left( r \right)}
(t) + b t + c.
\end{eqnarray}

Let $u(t) =  - ax^{\left( r \right)} (t) + bt + c$; then, Eq.
(\ref{Eq:fromEq1}) is transformed into
\begin{eqnarray}\label{Eq7:fromEq6}
 \frac{{du(t)}}{{dt}} =- a\frac{{dx^{\left( r \right)}
(t)}}{{dt}} +b= - au(t) + b.
\end{eqnarray}

To perform the indefinite integral on Eq. (\ref{Eq7:fromEq6}) and
reduce it, we obtain
\begin{eqnarray}\label{Eq8:fromEq7}
 - a\left( { - ax^{\left( r \right)} (t) + bt + c} \right) + b = e^{ - at}
 e^\kappa,
\end{eqnarray}
where $e^\kappa$ is a constant to be determined.

Substituting $t=1$ and $\left. {x^{\left( r \right)} (t)} \right|_{t
= 1}  = x^{\left( 0 \right)} (1)$ into Eq. (\ref{Eq8:fromEq7}), we
obtain
\begin{eqnarray}\label{Eq9:ek}
e^\kappa   = e^a \left( {a^2 x^{\left( 0 \right)} (1) - ab - ac + b}
\right).
\end{eqnarray}

It follows from Eqs. (\ref{Eq8:fromEq7}) and (\ref{Eq9:ek}) that
\begin{eqnarray}\label{Eq10: xtformula}
x^{\left( r \right)} (t) = \left[ {x^{\left( 0 \right)} (1) -
\frac{b}{a} + \frac{b}{{a^2 }} - \frac{c}{a}} \right]e^{ - a(t - 1)}
+ \frac{b}{a}t - \frac{b}{{a^2 }} + \frac{c}{a}.
\end{eqnarray}

Thus, the time response function of the FAGM(1,1,$k$) model is
\begin{eqnarray}
\hat x^{\left( r \right)} (k) = \left[ {x^{\left( 0 \right)} (1) -
\frac{b}{a} + \frac{b}{{a^2 }} - \frac{c}{a}} \right]e^{ - a(k - 1)}
+ \frac{b}{a}k - \frac{b}{{a^2 }} + \frac{c}{a},k = 2,3, \ldots,n,
\nonumber
\end{eqnarray}
and the restored value of $\hat x^{\left( 0 \right)} (k)\ k = 2,3,
\ldots,n$ can be expressed by
\begin{eqnarray}
\left( {\hat x^{\left( 0 \right)} (1),\hat x^{\left( 0 \right)} (2),
\ldots,\hat x^{\left( 0 \right)} (n)} \right) = \left( {\hat
x^{\left( r \right)} (1),\hat x^{\left( r \right)} (2), \ldots,\hat
x^{\left( r \right)} (n)} \right)D^r.\nonumber
\end{eqnarray}
\end{proof}

Setting $b=0$ in Eq. (\ref{Eq:fAGM11k}), the fractional
FAGM(1,1,$k$) model is reduced to the fractional FAGM(1,1) model
\cite{Wu2013GreyCNSNS} with the form
\begin{eqnarray}\label{eq11:FAGM11}
\frac{{dx^{\left( r \right)} (t)}}{{dt}} + aX^{\left( r \right)} (t)
= c.
\end{eqnarray}

Setting $r=1$ in Eq. (\ref{Eq:fAGM11k}), the fractional
FAGM(1,1,$k$) model is reduced to the GM(1,1,$k$,$c$) model
\cite{Chen2014FoundationMPE} with the form
\begin{eqnarray}\label{Eq12:GM11kc}
\frac{{dx^{\left( 1 \right)} (t)}}{{dt}} + ax^{\left( 1 \right)} (t)
= bt + c.
\end{eqnarray}

Setting $r=1$, $c=0$ in Eq. (\ref{Eq:fAGM11k}), the fractional
FAGM(1,1,$k$) model is reduced to the GM(1,1,$k$) model
\cite{Cui2013AAPM} with the form
\begin{eqnarray}\label{Eq13:GM11k}
\frac{{dx^{\left( 1 \right)} (t)}}{{dt}} + ax^{\left( 1 \right)} (t)
= bt.
\end{eqnarray}

Setting $r=1$, $b=0$, $c=1$ in Eq. (\ref{Eq:fAGM11k}), the
fractional FAGM(1,1,$k$) model is reduced to the GM(1,1) model
\cite{Deng1982ControlSCL} with the form
\begin{eqnarray}\label{GM11}
\frac{{dx^{\left( 1 \right)} (t)}}{{dt}} + ax^{\left( 1 \right)} (t)
= c.
\end{eqnarray}

Thereafter, the flaw of the FAGM(1,1,$k$) model is provided. Integrating
both sides of Eq. (\ref{Eq:fAGM11k}) in the interval $[k-1, k]$,
we obtain
\begin{eqnarray}\label{Eq15:k-1tok}
\int_{k - 1}^k {dx^{\left( r \right)} (t)}  + \int_{k - 1}^k
{ax^{\left( r \right)} (t)dt}  = \int_{k - 1}^k {btdt}  + \int_{k -
1}^k c{dt}.
\end{eqnarray}

With the knowledge of $\int_{k - 1}^k {dx^{\left( r \right)} (t)}  =
x^{\left( r \right)} (k) - x^{\left( r \right)} (k - 1) = x^{\left(
{r - 1} \right)} (k)$, $\int_{k - 1}^k {tdt}  = {{\left( {2k - 1}
\right)} \mathord{\left/
 {\vphantom {{\left( {2k - 1} \right)} 2}} \right.
 \kern-\nulldelimiterspace} 2}$ and $\int_{k - 1}^k {dt = 1}$,
the exact discrete differential equation is expressed by
\begin{eqnarray}\label{Eq16:exacteqa}
x^{\left( {r - 1} \right)} (k) + a\int_{k - 1}^k {x^{\left( r
\right)} (t)dt}  = \frac{{\left( {2k - 1} \right)b}}{2} + c.
\end{eqnarray}

A comparison between Eq. (\ref{Eq16:exacteqa}) and the basic Eq.
(\ref{Eq:differential}) indicates that differences exist in
the background value $z^{\left( r \right)} (k) = 0.5\left( {x^{\left( {r
- 1} \right)} (k - 1) + x^{\left( r \right)} (k)} \right)$ and
$\int_{k - 1}^k {x^{\left( r \right)} (t)dt}$. It is highly inaccurate
to compute the integration utilising the trapezoid formula if
$x^{\left( r \right)} (t)$ is not a linear function. Thus, the basic
form and whitenisation differential equation of the
FAGM(1,1,$k$) model do not strictly match.
%
\section{Parameter optimisation of FAGM(1,1,$k$) model}
\label{sec:params}

It can easily be verified that the parameters $\phi  = \left( {a,b,c}
\right)$, derived by the least-squares estimation in Eq.
(\ref{Eq:differential}) and the parameters of the time response
function $\hat x^{\left( r \right)} (k),k = 2,3, \ldots,n$ derived
by Eq. (\ref{Eq:fAGM11k}), have different meanings. When the response
function dose not satisfy the basic equation, large errors may arise.
To match the basic Eq. (\ref{Eq:differential}) and response
function (\ref{Eq:responsefunction}), the
system parameters are optimised in this system.

Setting the optimised parameters of the grey system as $\left( {\alpha
,\beta,\gamma } \right)$ and replacing the parameters $\phi  =
\left( {a,b,c} \right)$ in Eq. (\ref{Eq:fAGM11k}), the whitening
differential equation is rewritten as
\begin{eqnarray}\label{Eq17: optimal}
\frac{{dx^{\left( r \right)} (t)}}{{dt}} + \alpha x^{\left( r
\right)} (t) = \beta t + \gamma,\ r > 0.
\end{eqnarray}

Similarly, the general solution of Eq. (\ref{Eq17: optimal}) is
given by
\begin{eqnarray}\label{Eq18:generalsolution}
x^{\left( r \right)} (t) = \left[ {x^{\left( 0 \right)} (1) -
\frac{\beta }{\alpha } + \frac{\beta }{{\alpha ^2 }} - \frac{\gamma
}{\alpha }} \right]e^{ - \alpha (t - 1)}  + \frac{\beta }{\alpha }t
- \frac{\beta }{{\alpha ^2 }} + \frac{\gamma }{\alpha }.
\end{eqnarray}

Furthermore, we have
\begin{eqnarray}\label{Eq19:xhatk}
\hat x^{\left( r \right)} (k)\! =\! \left[ {x^{\left( 0 \right)} (1)
- \frac{\beta }{\alpha }\! + \!\frac{\beta }{{\alpha ^2 }} -
\frac{\gamma }{\alpha }} \right]e^{ - \alpha (k - 1)}  + \frac{\beta
}{\alpha }k - \frac{\beta }{{\alpha ^2 }} + \frac{\gamma }{\alpha
},k = 2,3, \ldots,n.
\end{eqnarray}

Substituting Eq. (\ref{Eq19:xhatk}) into the left side of Eq.
(\ref{Eq:differential}), we obtain
\begin{eqnarray}\label{Eq20:leftvalue}
&&\hskip-3mm L(t)=\left( {x^{\left( r \right)} (k) - x^{\left( r
\right)} (k - 1)} \right)
  + \frac{a}{2}\left( {x^{\left( {r - 1} \right)} (k - 1)
  + x^{\left( r \right)} (k)} \right) \nonumber\\
&& \hskip4.5mm = \left( {1 + \frac{a}{2}} \right)x^{\left( r
\right)} (k)
  - \left( {1 - \frac{a}{2}} \right)x^{\left( {r - 1} \right)} (k - 1) \nonumber\\
&& \hskip4.5mm  = \left( {1 + \frac{a}{2}} \right)
     \left[ {\left( {x^{\left( 0 \right)} (1) - \frac{\beta }{\alpha }
      + \frac{\beta }{{\alpha ^2 }} - \frac{\gamma }{\alpha }} \right)
      e^{ - \alpha (k - 1)}  + \frac{\beta }{\alpha }k - \frac{\beta }{{\alpha ^2 }}
      + \frac{\gamma }{\alpha }} \right] \nonumber\\
&& \hskip6mm  - \left( {1 - \frac{a}{2}} \right)
    \left[ {\left( {x^{\left( 0 \right)} (1) - \frac{\beta }{\alpha }
     + \frac{\beta }{{\alpha ^2 }} - \frac{\gamma }{\alpha }} \right)
     e^{ - \alpha (k - 2)}  + \frac{\beta }{\alpha }\left( {k - 1} \right)
     - \frac{\beta }{{\alpha ^2 }} + \frac{\gamma }{\alpha }} \right] \nonumber\\
&& \hskip4.5mm  = \left[ {\left( {1 + \frac{a}{2}} \right)
   - \left( {1 - \frac{a}{2}} \right)e^\alpha  } \right]
   \left( {x^{\left( 0 \right)} (1) - \frac{\beta }{\alpha }
    + \frac{\beta }{{\alpha ^2 }} - \frac{\gamma }{\alpha }} \right)
    e^{ - \alpha (k - 1)} \nonumber \\
&& \hskip6mm  + \frac{\beta }{\alpha }ak + \left( {1 - \frac{a}{2}}
\right)
   \frac{\beta }{\alpha } - \left( {\frac{\beta }{{\alpha ^2 }}
   - \frac{\gamma }{\alpha }} \right)a.
\end{eqnarray}

Owing to the left side $L(t)$ and right side $R(t)$ equivalence,
namely $L(t)-R(t)=0$, it is implied that
\begin{eqnarray}
&&\left( {1 + \frac{a}{2}} \right) - \left( {1 - \frac{a}{2}}
\right)e^\alpha   = 0, \label{Eq21:first} \\
&&\frac{\beta }{\alpha }a = b, \label{Eq22:second} \\
&&\left( {1 - \frac{a}{2}} \right)\frac{\beta }{\alpha } - \left(
{\frac{\beta }{{\alpha ^2 }} - \frac{\gamma }{\alpha }} \right)a = c
- \frac{b}{2}. \label{Eq23:third}
\end{eqnarray}

It follows from Eqs. (\ref{Eq21:first}) to (\ref{Eq23:third}) that
\begin{eqnarray}
&&\alpha  = \ln \frac{{2 + a}}{{2 - a}},\label{Eq24:one} \\
&&\beta  = \frac{b}{a}\ln \frac{{2 + a}}{{2 - a}},\label{Eq25:two} \\
&&\gamma  = \frac{{\alpha c}}{a} - \frac{{\alpha b}}{{2a}} +
\frac{\beta }{\alpha } + \frac{\beta }{2} - \frac{\beta }{a}.
\label{Eq26:three}
\end{eqnarray}

Thus, the optimised parameters $\left( {\alpha,\beta,\gamma }
\right)$ are obtained by Eqs. (\ref{Eq24:one}) to (\ref{Eq26:three}),
and they also indicate that the parameters $\left( {a,b,c} \right)$ derived
by the least-squares estimation satisfy the relationship in Eqs.
(\ref{Eq24:one}) to (\ref{Eq26:three}). In this paper, the
FAGM(1,1,$k$) model with optimised parameters is referred to as the
FAGMO(1,1,$k$) model.

\begin{theorem}\label{Th4:eqals}
Assuming that the original data $\left( {x^{(0)} (1),x^{(0)} (2),
\ldots,x^{(0)} (n)} \right)$ satisfy Eq. (\ref{Eq19:xhatk})
with the given parameters $\left( {\hat \alpha,\hat \beta,\hat \gamma
} \right)$, the parameters $\left( {\alpha,\beta,\gamma }
\right)$ of FAGMO(1,1,$k$) obtained by Eqs. (\ref{Eq:esparams}) and
(\ref{Eq24:one}) to (\ref{Eq26:three}) satisfy the relationship $\hat
\alpha  = \alpha,\hat \beta  = \beta,\hat \gamma  = \gamma$, and the
predicted values $\left( {\hat x^{(0)} (1),\hat x^{(0)} (2), \ldots
,\hat x^{(0)} (n)} \right)$ are equal to the given data $\left( {x^{(0)}
(1),x^{(0)} (2), \ldots,x^{(0)} (n)} \right)$.
\end{theorem}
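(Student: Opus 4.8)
The strategy is to run the construction of Section~\ref{sec:params} backwards. Starting from the hypothesis that the data are already ``model data'' for the response function~(\ref{Eq19:xhatk}) with parameters $(\hat\alpha,\hat\beta,\hat\gamma)$, I would first recover the triple $(\hat a,\hat b,\hat c)$ that drives it, then show the data satisfy the basic equation~(\ref{Eq:differential}) exactly with that triple, so that the least-squares step~(\ref{Eq:esparams}) returns it verbatim, and finally carry the recovered parameters back through~(\ref{Eq24:one})--(\ref{Eq26:three}) and through the restoration~(\ref{Eq:restoredvalues}).

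For the first step, note that the map $(a,b,c)\mapsto(\alpha,\beta,\gamma)$ of~(\ref{Eq24:one})--(\ref{Eq26:three}) is invertible once $\hat\alpha\neq 0$ (the case $\hat\alpha=0$ being degenerate): solving~(\ref{Eq24:one}) gives $\hat a = 2(e^{\hat\alpha}-1)/(e^{\hat\alpha}+1)\in(-2,2)$, then $\hat b = \hat a\hat\beta/\hat\alpha$ from~(\ref{Eq25:two}), and $\hat c$ from~(\ref{Eq26:three}). By construction $(\hat a,\hat b,\hat c)$ and $(\hat\alpha,\hat\beta,\hat\gamma)$ jointly satisfy the identities~(\ref{Eq21:first})--(\ref{Eq23:third}). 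For the second step, I would substitute the hypothesis, namely that the $r$-AGO sequence $x^{(r)}(k)$ equals the right-hand side of~(\ref{Eq19:xhatk}) with the hatted parameters for $k=1,2,\dots,n$ (for $k=1$ that right-hand side reduces to $x^{(0)}(1)$, so this is automatic), into the left-hand side $L(k)$ of~(\ref{Eq:differential}) written with $\hat a$. This is precisely the computation in~(\ref{Eq20:leftvalue}); because~(\ref{Eq21:first})--(\ref{Eq23:third}) hold for our triples, the coefficient of $e^{-\hat\alpha(k-1)}$ vanishes and the affine part collapses to $\hat b(2k-1)/2+\hat c$. Hence $x^{(r-1)}(k)+\hat a\,z^{(r)}(k)=\hat b(2k-1)/2+\hat c$ for $k=2,\dots,n$, i.e.\ $Y=B\widehat\phi$ in the notation of~(\ref{Eq:esparams}) with $\widehat\phi=(\hat a,\hat b,\hat c)^{\mathcal{T}}$.

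The third step is then immediate: the least-squares estimate satisfies $\phi=(B^{\mathcal{T}}B)^{-1}B^{\mathcal{T}}Y=(B^{\mathcal{T}}B)^{-1}B^{\mathcal{T}}B\widehat\phi=\widehat\phi$, so $(a,b,c)=(\hat a,\hat b,\hat c)$; feeding this into~(\ref{Eq24:one})--(\ref{Eq26:three}) returns $(\alpha,\beta,\gamma)=(\hat\alpha,\hat\beta,\hat\gamma)$ by the way $(\hat a,\hat b,\hat c)$ was chosen, which is the first claim. For the second claim, with $(\alpha,\beta,\gamma)=(\hat\alpha,\hat\beta,\hat\gamma)$ the fitted response~(\ref{Eq19:xhatk}) is term-for-term the assumed expression for $x^{(r)}(k)$ (and $\hat x^{(r)}(1)=x^{(0)}(1)=x^{(r)}(1)$ by the initial-condition convention), so $\hat X^{(r)}=X^{(r)}$, and~(\ref{Eq:restoredvalues}) together with Definition~\ref{def:iago} (via Theorem~\ref{th:ago-and-iago}) give $\hat X^{(0)}=\hat X^{(r)}D^r=X^{(r)}D^r=X^{(0)}$.

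The only point that is not pure bookkeeping is the invertibility of $B^{\mathcal{T}}B$ used in the third step, i.e.\ that $B$ has full column rank $3$. I would handle this with a standing non-degeneracy assumption: if the data genuinely come from~(\ref{Eq19:xhatk}) with $\hat\alpha\neq 0$ and a nonzero coefficient in front of $e^{-\hat\alpha(k-1)}$, then the first column of $B$ (the values $-z^{(r)}(k)$) is not an affine function of $k$, while the other two columns are linearly independent affine functions, so $\mathrm{rank}\,B=3$ provided $\nu\ge 4$; this is in any case implicit in the very use of $(B^{\mathcal{T}}B)^{-1}$ in~(\ref{Eq:esparams}). Everything else reuses, in reverse, the algebra already carried out in~(\ref{Eq20:leftvalue})--(\ref{Eq26:three}).
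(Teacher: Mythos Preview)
Your proposal is correct and follows the same overall pipeline as the paper --- compute $(a,b,c)$ by least squares, then push through (\ref{Eq24:one})--(\ref{Eq26:three}) --- but your argument is substantially more complete than the paper's own proof, which is essentially a one-line assertion (``Obviously, $\hat\alpha=\alpha,\hat\beta=\beta,\hat\gamma=\gamma$''). In particular, the paper does not spell out the inversion of the map $(a,b,c)\mapsto(\alpha,\beta,\gamma)$, does not explicitly verify that model data satisfy the basic equation~(\ref{Eq:differential}) exactly via the computation~(\ref{Eq20:leftvalue}) so that $Y=B\widehat\phi$, and does not mention the full-rank condition on $B$ that makes $(B^{\mathcal T}B)^{-1}B^{\mathcal T}Y=\widehat\phi$ legitimate. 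Your treatment of these points, and of the restoration step $\hat X^{(0)}=\hat X^{(r)}D^r=X^{(r)}D^r=X^{(0)}$, is exactly what is needed to turn the paper's sketch into an actual proof.
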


\begin{proof}
Substituting the original data $x^{(0)} (k),k = 1,2, \ldots,n$ into
Eq. (\ref{Eq:esparams}), the parameters $\left( {\hat \alpha,\hat \beta,\hat \gamma} \right)$ can be
derived. The parameters $\left( {\alpha,\beta,\gamma } \right)$ of
FAGMO(1,1,$k$) can be obtained from Eqs.
(\ref{Eq24:one}) to (\ref{Eq26:three}). Obviously, $\hat \alpha  =
\alpha,\hat \beta  = \beta,\hat \gamma  = \gamma$. Therefore, the
predicted values $\hat x^{(0)} (k),k = 1,2, \ldots,n$ of the
FAGMO(1,1,$k$) are equal to the given data $x^{(0)} (k),k = 1,2,
\ldots,n$.
\end{proof}

Theorem \ref{Th4:eqals} demonstrates that the FAGMO(1,1,$k$) model is
accurate for predicting arbitrary sequences that can be modelled by Eq.
(\ref{Eq19:xhatk}), while the FAGM(1,1,$k$) model cannot describe the
sequences accurately owing to there always being a non-zero difference
between the real parameters and the parameters $\left( {a,b,c}
\right)$.

\begin{theorem}\label{Th5:approximately}
The optimised parameters $\left( {\alpha,\beta,\gamma } \right)$
are approximately equivalent to the parameters $\left( {a,b,c}
\right)$ when the value of $\left| a \right|$ is very small; that
is,
\begin{eqnarray}\label{Eq27:approximately}
\alpha  \approx a,\ \beta  \approx b,\ \gamma  \approx c.
\end{eqnarray}
\end{theorem}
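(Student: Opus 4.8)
The plan is to treat $a$ as a small parameter and Taylor-expand the closed-form expressions (\ref{Eq24:one})--(\ref{Eq26:three}) about $a=0$. The one ingredient needed is the elementary expansion
\begin{eqnarray}
\ln\frac{2+a}{2-a}=\ln\left(1+\frac{a}{2}\right)-\ln\left(1-\frac{a}{2}\right)=a+\frac{a^{3}}{12}+O(a^{5}),\nonumber
\end{eqnarray}
valid for $|a|<2$. By (\ref{Eq24:one}) this gives $\alpha=a+\frac{a^{3}}{12}+O(a^{5})$, so $\alpha\to a$ and $\alpha/a\to1$ as $a\to0$; in particular $\alpha\approx a$ when $|a|$ is small.

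First I would dispose of $\beta$. By (\ref{Eq25:two}), $\beta=\frac{b}{a}\ln\frac{2+a}{2-a}=\frac{b}{a}\,\alpha=b\left(1+\frac{a^{2}}{12}+O(a^{4})\right)$, hence $\beta=b+O(a^{2})\approx b$. Note that $\beta/\alpha=b/a$ exactly, which is unbounded as $a\to0$; this term must therefore cancel against another divergent term in the expression for $\gamma$.

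Next I would treat $\gamma$. Grouping the two potentially singular pieces in (\ref{Eq26:three}) gives
\begin{eqnarray}
\gamma=\frac{\alpha c}{a}-\frac{\alpha b}{2a}+\frac{\beta}{2}+\left(\frac{\beta}{\alpha}-\frac{\beta}{a}\right).\nonumber
\end{eqnarray}
Using $\beta=\frac{b}{a}\alpha$ one finds $\frac{\beta}{\alpha}-\frac{\beta}{a}=\frac{b}{a}-\frac{b\alpha}{a^{2}}=\frac{b(a-\alpha)}{a^{2}}$, and since $a-\alpha=-\frac{a^{3}}{12}+O(a^{5})$ this equals $-\frac{ab}{12}+O(a^{3})\to0$. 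The remaining terms satisfy $\frac{\alpha c}{a}\to c$, $\frac{\alpha b}{2a}\to\frac{b}{2}$ and $\frac{\beta}{2}\to\frac{b}{2}$, so $\gamma\to c-\frac{b}{2}+\frac{b}{2}+0=c$, i.e. $\gamma\approx c$ for small $|a|$. Combining the three estimates yields (\ref{Eq27:approximately}).

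The only delicate point --- the one I would emphasise in the write-up --- is the cancellation of the $O(1/a)$ singularities of $\beta/\alpha$ and $\beta/a$ in the formula for $\gamma$; once those are paired off, everything reduces to routine first-order Taylor estimates and the claimed approximations follow.
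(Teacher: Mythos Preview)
Your proof is correct. The overall strategy coincides with the paper's --- examine $\alpha-a$, $\beta-b$, $\gamma-c$ in turn and show each tends to $0$ --- but the execution differs. The paper works with the difference functions $\varepsilon_i(a)$ directly: for $\varepsilon_1$ it computes the derivative $\frac{a^2}{4-a^2}$ and argues monotonicity; for $\varepsilon_2=\frac{b}{a}\varepsilon_1$ it invokes a L'H\^opital limit $\lim_{a\to0}\frac{1}{a}\ln\frac{2+a}{2-a}=1$; and for $\varepsilon_3$ it derives the compact identity $\varepsilon_3=\frac{\varepsilon_1(a)}{a}\bigl(c-\frac{b}{a}\bigr)$ and then simply appeals to the previous two approximations. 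You instead start from the explicit Taylor expansion $\ln\frac{2+a}{2-a}=a+\frac{a^{3}}{12}+O(a^{5})$ and propagate it through (\ref{Eq24:one})--(\ref{Eq26:three}), which yields quantitative error terms ($\alpha-a=O(a^{3})$, $\beta-b=O(a^{2})$, $\gamma-c=O(a)$) rather than qualitative monotonicity statements. Your treatment of $\gamma$ is actually more careful than the paper's: you isolate and cancel the two $O(1/a)$ pieces $\beta/\alpha$ and $\beta/a$ explicitly, whereas the paper's final line leaves that cancellation implicit in the product $\frac{\varepsilon_1}{a}\cdot\bigl(c-\frac{b}{a}\bigr)$.
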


\begin{proof}
We first consider the difference between parameter $\alpha$ and $a$,
which is
\begin{eqnarray}\label{Eq28:varepsilon1}
\varepsilon _1 \left( a \right) = \alpha  - a = \ln \frac{{2 +
a}}{{2 - a}} - a.
\end{eqnarray}

It is known that $\left. {\varepsilon _1 \left( a \right)}
\right|_{a = 0}  = 0$ and the first-order derivative is
\begin{eqnarray}\label{Eq29:derivation}
\frac{{d\varepsilon _1 \left( a \right)}}{{da}} = \frac{{d\left(
{\ln \frac{{2 + a}}{{2 - a}} - a} \right)}}{{da}} = \frac{4}{{4 -
a^2 }} - 1 = \frac{{a^2 }}{{4 - a^2 }}.
\end{eqnarray}

When $\left| a \right| < 2$, the derivative of $\varepsilon _1
\left( a \right)$ is positive, which indicates that the function
$\varepsilon _1 \left( a \right)$ is a monotonically increasing
function in the interval [-2,2]. Thus, the value $\varepsilon _1
\left( a \right)$ approaches zero as $\left| a \right|$
decreases. Therefore, $\alpha  \approx a$ when $\left| a \right|$ is
very small.

Secondly, the difference between $\beta$ and $b$ is expressed as
\begin{eqnarray}\label{Eq30:betab}
\varepsilon _2 \left( a \right) = \beta  - b = \frac{b}{a}\ln
\frac{{2 + a}}{{2 - a}} - b = \frac{b}{a}\varepsilon _1 \left( a
\right).
\end{eqnarray}

Owing to $\mathop {\lim }\limits_{a \to 0} \frac{1}{a}\ln \frac{{2 +
a}}{{2 - a}} = \mathop {\lim }\limits_{a \to 0} \frac{{2 - a}}{{2 +
a}}\frac{{2 - a + 2 + a}}{{\left( {2 - a} \right)^2 }} = \mathop
{\lim }\limits_{a \to 0} \frac{4}{{4 - a^2 }} = 1$, we know that
$\varepsilon _2 \left( a \right) \to 0$ when $a \to 0$.

The first-order derivative of $\varepsilon _2 \left( a \right)$ is
\begin{eqnarray}\label{Eq31:derivativevarepsilon2}
\frac{{d\varepsilon _2 \left( a \right)}}{{da}} = \frac{b}{{a^2
}}\left( {\frac{{4a}}{{4 - a^2 }} - \ln \frac{{2 + a}}{{2 - a}}}
\right),
\end{eqnarray}
which is also positive when $\left| a \right| < 2$. Thus,
$\varepsilon _2 \left( a \right)$ decreases when the value of
$\left| a \right|$ decreases and $\beta  \approx b$ when
$\left| a \right|$ is very small.

Thirdly, the difference between $\gamma$ and $c$ is expressed as
\begin{eqnarray}\label{Eq32:differenceevarepsilon2}
\varepsilon _3 (a) = \gamma  - c = \frac{{\alpha c}}{a} -
\frac{{\alpha b}}{{2a}} + \frac{\beta }{\alpha } + \frac{\beta }{2}
- \frac{\beta }{a} - c = \frac{{\varepsilon _1 \left( a
\right)}}{a}\left( {c - \frac{b}{a}} \right).
\end{eqnarray}

It follows from $\alpha  \approx a$ and $\beta  \approx b$ that
$\gamma  \approx c$ when $\left| a \right|$ is very small.
\end{proof}

From Theorem \ref{Th5:approximately}, we know that the differences
between the parameters $\left( {\alpha,\beta,\gamma } \right)$ and
$\left( {a,b,c} \right)$ are decrease along with smaller
$\left| a \right|$. Table \ref{table1:values} provides the values
of $\varepsilon _1 \left( a \right)$ and ${{\varepsilon _1 \left( a
\right)} \mathord{\left/
 {\vphantom {{\varepsilon _1 \left( a \right)} a}} \right.
 \kern-\nulldelimiterspace} a}$ under different values of $\left| a
 \right|$.

\begin{table}[!htbp]
\caption{Values of $\varepsilon_1\left( a \right)$ and
$\varepsilon_1 \left( a \right)/a$ under different values of $\left|
a \right|$}
 \label{table1:values}
\centering \scriptsize\centerline{\tabcolsep=5pt
\begin{tabular}{lccccccccc}
\hline
 $\left|a \right|$                  &0.1 &0.2 &0.3 &0.5 &0.7 &1.0 &1.3 &1.6 &1.9  \\
 \hline
 $\varepsilon_1\left( a \right)$    &0.0001  &0.0007  &0.0023  &0.0108  &0.0309  &0.0986  &0.2506 &0.5972 &1.7636  \\
$\varepsilon_1 \left( a \right)/a$  &0.0008  &0.0034  &0.0076  &0.0217  &0.0441  &0.0986  &0.1928 &0.3733 &0.9282\\
 \hline
\end{tabular}}
\end{table}
%
\section{Modelling evaluation criteria and detailed modelling steps}
\label{sec:criteria}

To evaluate forecasting accuracy of the FAGMO(1,1,$k$)
model, the root mean squared percentage error (RMSPE)
is applied to the prior-sample period (RMSPEPR) and post-sample period
(RMSPEPO). In general, the RMSPEPR, RMSPEPO and RMSPE are
defined as
\begin{eqnarray}
&&{\rm RMSPEPR} = \sqrt {\frac{1}{\nu }\sum\limits_{k = 1}^\nu
{\left( {\frac{{\hat x_1^{\left( 0 \right)} (k) - x_1^{\left( 0
\right)}
(k)}}{{x_1^{\left( 0 \right)} (k)}}} \right)^2 } }  \times 100\%,\label{Eq33:RMSPEPR}\\
&&{\rm RMSPEPO} = \sqrt {\frac{1}{{n - \nu }}\sum\limits_{k = \nu  +
1}^n {\left( {\frac{{\hat x_1^{\left( 0 \right)} (k) - x_1^{\left( 0
\right)} (k)}}{{x_1^{\left( 0 \right)} (k)}}} \right)^2 } } \times
100\%,\label{Eq34:RMSPEPO}\\
&&{\rm RMSPE} = \sqrt {\frac{1}{n}\sum\limits_{k = 1}^n {\left(
{\frac{{\hat x_1^{\left( 0 \right)} (k) - x_1^{\left( 0 \right)}
(k)}}{{x_1^{\left( 0 \right)} (k)}}} \right)^2 } }  \times
100\%,\label{Eq35:RMSPE}
\end{eqnarray}
where $\nu$ is the number of samples used to construct the model
and $n$ is the total number of samples.

The index of agreement of the forecasting results is defined as
\begin{eqnarray}
{\rm IA} = 1 - \frac{{\sum\nolimits_{k = 1}^n {\left( {\hat
x^{\left( 0 \right)} (k) - x^{\left( 0 \right)} (k)} \right)^2 }
}}{{\sum\nolimits_{k = 1}^n {\left( {\left| {\hat x^{\left( 0
\right)} (k) - \overline x} \right| + \left| { x^{\left( 0 \right)}
(k) - \overline x} \right|} \right)^2 } }},\label{Eq36:IA}
\end{eqnarray}
which is also a useful performance measure for sensitivity to
differences in the observed and predicted data, where $\overline x$ is
the average sample value.

The average forecasting error (AE) and the mean absolute forecasting
error (MAE) are
\begin{eqnarray}
&&{\rm AE} = \frac{1}{n}\sum\limits_{k = 1}^n {\left( {\hat
x^{\left( 0 \right)} (k) - x^{\left( 0 \right)} (k)} \right)},\label{Eq37:AE}\\
&&{\rm MAE} = \frac{1}{n}\sum\limits_{k = 1}^n {\left| {\hat
x^{\left( 0 \right)} (k) - x^{\left( 0 \right)} (k)}
\right|},\label{Eq38:MAE}
\end{eqnarray}
where AE reflects the positive and negative errors between the predicted and
observed values, while MAE is applied for estimating the change in the
forecasting model.

The detailed modelling steps of the fractional FAGMO(1,1,$k$) are
provided below.

{\bf Step 1:} Determine the original data series $x^{\left( 0
\right)} (i),i = 1,2, \ldots,n$, and $r$-AGO series $X^{\left(
r \right)}  = X^{\left( 0 \right)} A^r$.

{\bf Step 2:} Calculate the matrices $B$ and $Y$ to determine $\left(
{a,b,c} \right)$ using Eq. (\ref{Eq:esparams}).

{\bf Step 3:} Compute the parameters $\left( {\alpha,\beta,\gamma
} \right)$ by employing Eqs. (\ref{Eq24:one}) to (\ref{Eq26:three}).

{\bf Step 4:} Substitute the values of $x^{(0)} (1)$ and $\left(
{\alpha,\beta,\gamma } \right)$  into Eq. (\ref{Eq19:xhatk}) to
compute the predicted values $\hat X^{(r)}$.

{\bf Step 5:} Apply the $r$-IAGO matrix to obtain the restored
values $\hat X^{(0)}  = \hat X^{(r)} D^r$.

%
\section{Validation of FAGMO(1,1,$k$) model}
\label{sec:validation}

This section provides numerical examples to validate the accuracy of
the FAGMO(1,1,$k$) model compared to the FAGM(1,1,$k$) model and others.

\subsection{Validation of FAGMO(1,1,$k$) and FAGM(1,1,$k$) models}
 \label{subsec6.1:validation of}

This subsection presents a numerical example to validate the
accuracy of the FAGMO(1,1,$k$) and FAGM(1,1,$k$) models.
The values $r$ and $\alpha$ are provided in the interval [0.01, 2] and
[-1.99, 1.99], respectively. The initial point $x^{\left( r \right)}
(1)$ is randomly generated in the interval [1, 2] by the uniform
distribution, while the parameters $\beta$ and $\gamma$ are randomly generated in the intervals [0, 5] and [0, 100], respectively,
by the uniform distribution. The other $x^{\left( r \right)} \left(
i \right)\left( {i > 1} \right)$ are generated with the aid of Eq.
(\ref{Eq19:xhatk}). All data used for the example are explained in Fig.
\ref{fig5:Validation }.
\begin{figure}[!htbp]
\centering\centerline{\includegraphics[height=5.2cm,width=10cm]{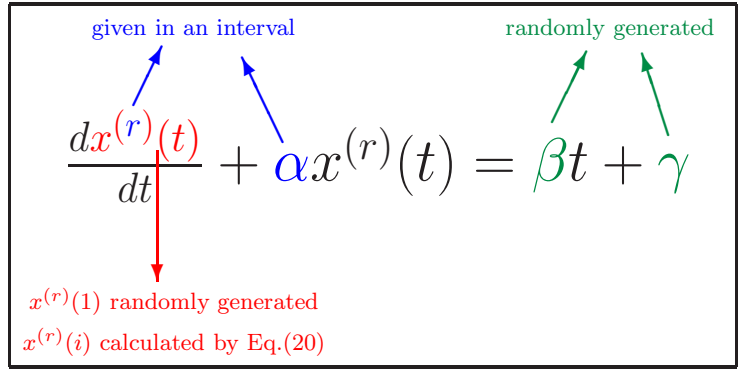}}
\caption{Diagram of data for validation}
 \label{fig5:Validation }
\end{figure}

We define the notation in the following analysis
\begin{eqnarray}\label{Eq39:varepsilonparams}
\varepsilon _{\rm {params}}  = \left( {p - \alpha } \right)^2  +
\left( {l - \beta } \right)^2  + \left( {q - \gamma } \right)^2,
\end{eqnarray}
where $\left( {\alpha,\beta,\gamma } \right)$ are the provided parameters
of Eq. (\ref{Eq19:xhatk}) and $\left( {p,l,q} \right)$ are the
estimated parameters of the FAGM(1,1,$k$) or
FAGMO(1,1,$k$) model.

When applying the above parameters, the graphs are displayed in Figs.
\ref{fig6:varepsilonparams} and \ref{fig7:RMSPE}. We observe from
Fig. \ref{fig6:varepsilonparams} that the maximum $\varepsilon
_{\rm {params}}$ of FAGMO(1,1,$k$) and FAGM(1,1,$k$) are
$5.4228\times 10^{-5}$ and 489.9434, respectively, where the
magnitude is approximately 9034932. Furthermore, the $\varepsilon _{\rm
{params}}$ of the FAGM(1,1,$k$) model is very small when $\alpha$ is
near zero, which is coincident with Theorem
\ref{Th5:approximately}. From Fig. \ref{fig7:RMSPE}, the maximum
RMSPEs of FAGMO(1,1,$k$) and FAGM(1,1,$k$) are 0.0103\% and
814.3864\%, respectively, where the magnitude is approximately 79100.

It is known that the parameters $\beta$, $\gamma$, and initial points
$x^{\left( r \right)} (1)$ are all randomly generated, which implies
that the values of parameters $\beta$, $\gamma$ and $x^{\left( r
\right)} (1)$ have no influence on the output series. Here, the
values $r$ and $\alpha$ are the most important factors affecting
the accuracy of the grey models.

\begin{figure}[!htbp]
\centering
\begin{minipage}[c]{0.5\textwidth}
\centering
\includegraphics[height=5.5cm,width=6.2cm]{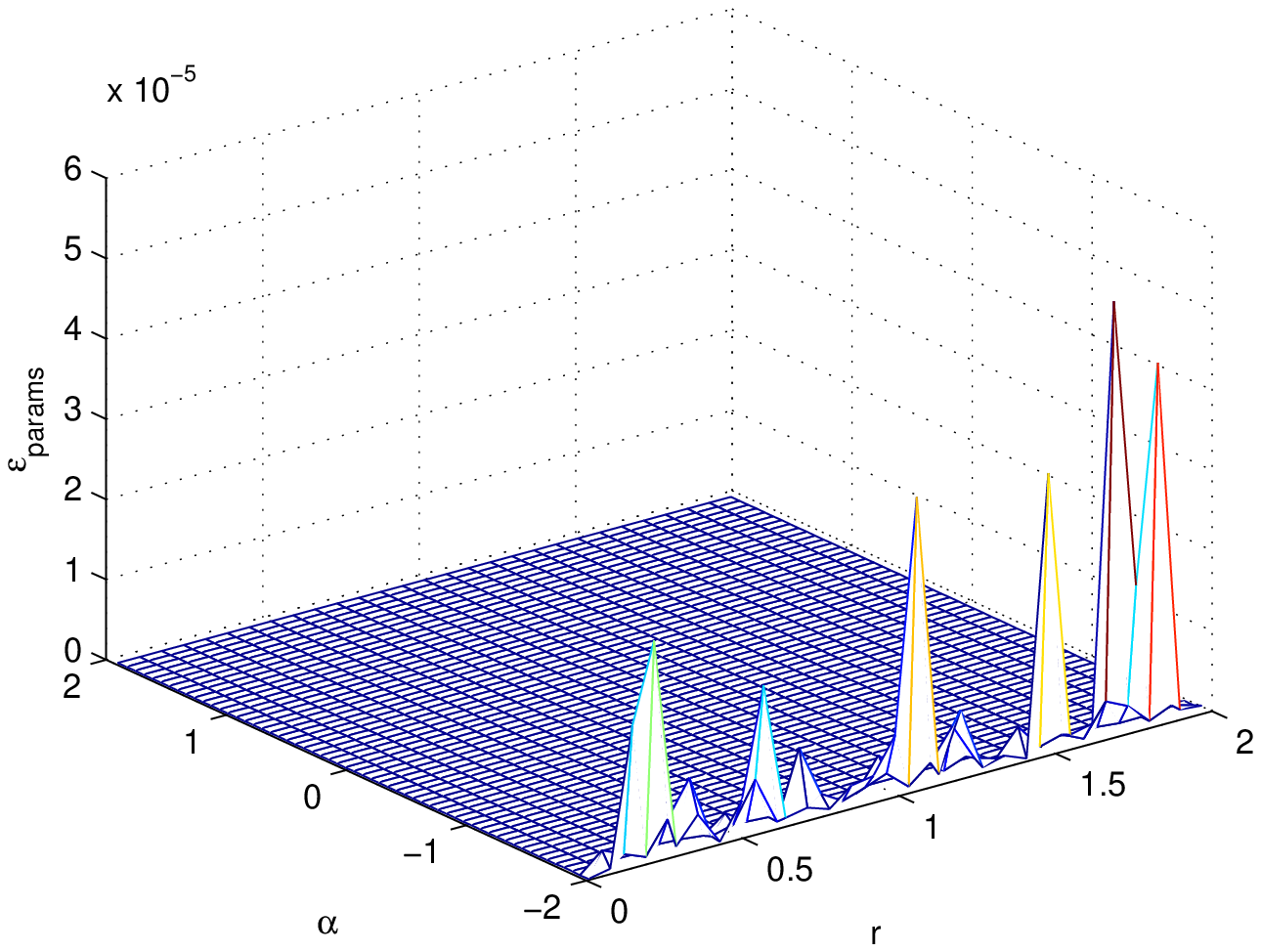}
\end{minipage}%
\begin{minipage}[c]{0.5\textwidth}
\centering
\includegraphics[height=5.5cm,width=6.2cm]{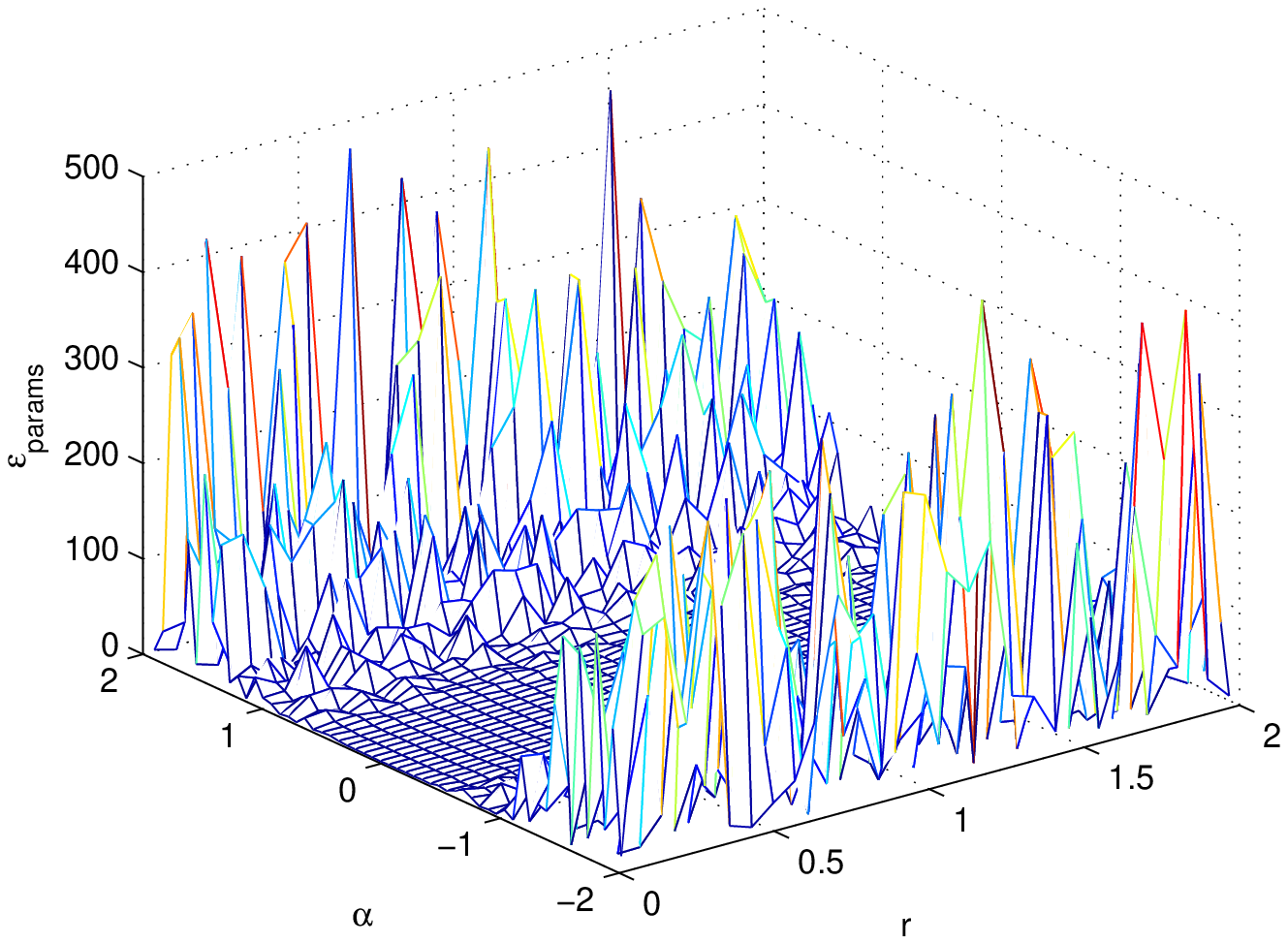}
\end{minipage}
\caption{Values of $\varepsilon _{\rm {params}}$ of
FAGMO(1,1,$k$) (left) and FAGM(1,1,$k$) (right) models}
 \label{fig6:varepsilonparams}
\end{figure}  \vskip-2mm

\begin{figure}[!htbp]
\centering
\begin{minipage}[c]{0.5\textwidth}
\centering
\includegraphics[height=5.5cm,width=6.2cm]{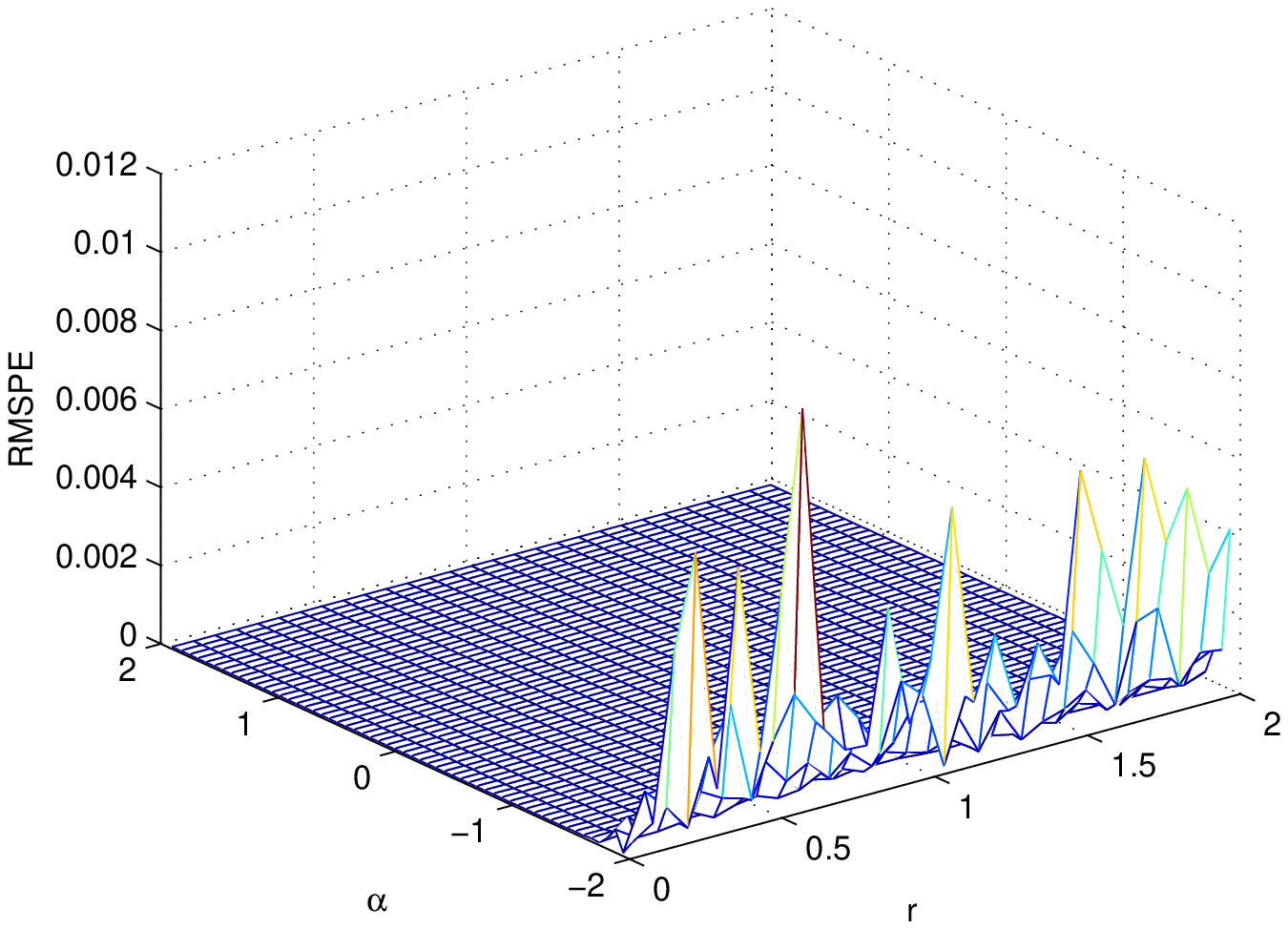}
\end{minipage}%
\begin{minipage}[c]{0.5\textwidth}
\centering
\includegraphics[height=5.5cm,width=6.2cm]{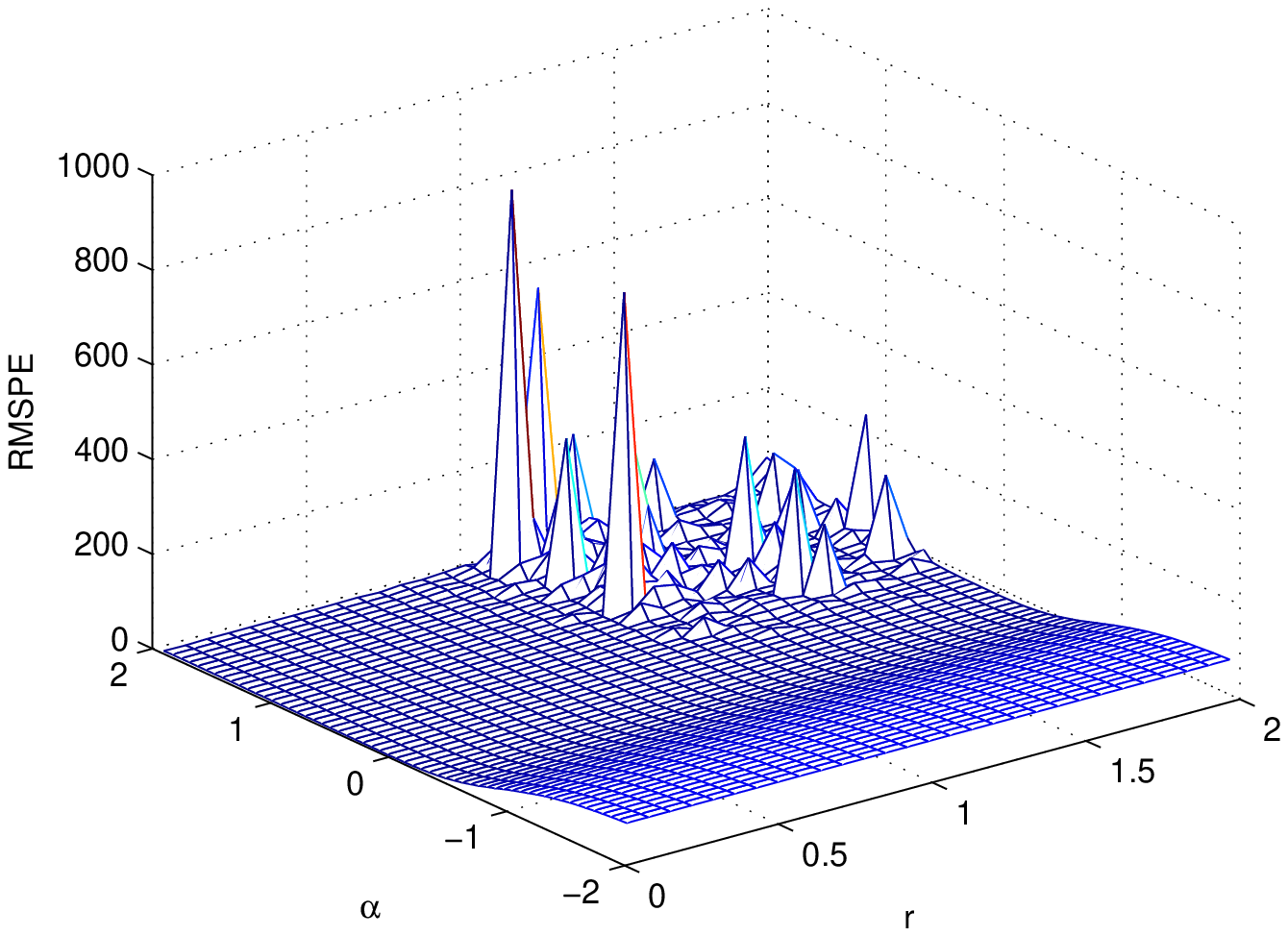}
\end{minipage}
\caption{Values of RMSPE of FAGMO(1,1,$k$) (left) and
FAGM(1,1,$k$) (right) models}
 \label{fig7:RMSPE}
\end{figure}
%
\subsection{Validation of FAGMO(1,1,$k$) model and other grey models}
 \label{subsec6.2:Validationother}

This subsection further demonstrates the advantage of the
FAGMO(1,1,$k$) model using two real cases.

{\bf Case 1:} (Predicting cumulative oil field production). We
consider an example from the paper \cite{Ma2016PredictingJCTN} that
provides sample data. The data from 1999 to 2009 are applied to construct
the grey model, while the data from 2010 to 2012 are used for prediction. The
values are listed in Table \ref{table2:validation}, indicating
that the FAGMO(1,1,$k$) model outperforms the other models in this
case.

\begin{table}[!htbp]
\caption{Results of ENGM, FAGM(1,1), FAGM(1,1,$k$) and FAGMO(1,1,$k$) models}
 \label{table2:validation}
 \centering
 \scriptsize
 \begin{tabular}{cccccccccccccc}
 \hline
Year    &Data        &  ENGM      & FAGM(1,1)  & FAGM(1,1,$k$) &FAGMO(1,1,$k$)\\
\hline
        &            & $r=1$      &$r=0.1106$  &$r=0.4073$     &$r=0.4052$\\
1999    &73.8217     &73.8217     &73.8217     &73.8217        &73.8217\\
2000    &136.8817    &138.4900    &138.1621    &137.1758       &136.4573\\
2001    &195.0590    &195.4541    &195.5377    &196.1598       &195.7633\\
2002    &247.8547    &247.9776    &247.7638    &249.3183       &249.1781\\
2003    &297.0902    &296.4067    &295.7629    &297.2895       &297.2750\\
2004    &342.6394    &341.0604    &340.1238    &341.0008       &341.0322\\
2005    &382.4312    &382.2332    &381.2700    &381.2882       &381.3320\\
2006    &420.0399    &420.1964    &419.5291    &418.8204       &418.8699\\
2007    &454.0430    &455.2001    &455.1670    &454.1099       &454.1712\\
2008    &485.1171    &487.4752    &488.4068    &487.5452       &487.6290\\
2009    &519.8508    &517.2342    &519.4402    &519.4217       &519.5393\\[5pt]
2010    &552.6569    &544.6734    &548.4350    &549.9665       &550.1281\\
2011    &581.6092    &569.9736    &575.5400    &579.3572       &579.5714\\
2012    &608.1863    &593.3015    &600.8887    &607.7346       &608.0086\\[5pt]
\multicolumn{2}{l}{RMSPEPR}&0.4521\%    &0.4582\%    &0.3539\%       &{\bf 0.3259\%}\\
\multicolumn{2}{l}{RMSPEPO}&2.0066\%    &1.0185\%    &0.3617\%       &{\bf 0.3332\%}\\
\hline
\end{tabular}
\end{table}

{\bf Case 2:}  (Predicting foundation settlement close
neighbouring Yangtze River). We consider an example from the
paper \cite{Chen2014FoundationMPE}, which provides sample data to
construct the grey model. The values are presented in Table
\ref{table3:case2}, indicating that the FAGMO(1,1,$k$) model
outperforms the other models in this case.

\begin{table}[!htbp]
\caption{Results of ONGM(1,1,$k$,$c$), FAGM(1,1), FAGM(1,1,$k$) and FAGMO(1,1,$k$) models}
 \label{table3:case2}
 \centering
 \scriptsize
 \begin{tabular}{cccccccccccccc}
 \hline
Day &Data   &ONGM(1,1,$k$,$c$) &FAGM(1,1) &FAGM(1,1,$k$) & FAGMO(1,1,$k$)\\
 \hline
             &            &$r=1$      &$r=0.0065$   &$r=0.2266$   &$r=0.2295$\\
10  &23.36   &23.3600     &23.3600     &23.3600     &23.3600\\
20  &43.19   &42.1779     &43.3517     &43.0586     &43.0644\\
30  &58.73   &59.2549     &59.4403     &58.8205     &58.8124\\
40  &70.87   &72.8374     &72.4009     &71.9763     &71.9545\\
50  &83.71   &83.6405     &82.8451     &83.0247     &82.9932\\
60  &92.91   &92.2330     &91.2620     &92.2158     &92.1789\\
70  &99.73   &99.0672     &98.0442     &99.6885     &99.6491\\
80  &105.08  &104.5030    &103.5079    &105.5215    &105.4805\\
90  &109.73  &108.8264    &107.9079    &109.7568    &109.7127\\
100 &112.19  &112.2652    &111.4497    &112.4117    &112.3598\\
110 &113.45  &115.0002    &114.2991    &113.4857    &113.4181\\[5pt]
\multicolumn{2}{l}{RMSPE} &1.2730\%     &1.3257\%     &0.6030\%     &{\bf0.6011\%}\\
\hline
\end{tabular}
\end{table}

%
\newpage
\section{ Applications}
 \label{sec:appli}

In this section, the FAGMO(1,1,$k$) model is applied to forecast the
nuclear energy consumption of China. The computational results of the
FAGMO(1,1,$k$) model are compared to the ENGM
\cite{Ma2016PredictingJCTN}, ONGM(1,1,$k$)
\cite{Chen2014FoundationMPE}, FAGM(1,1) \cite{Wu2013GreyCNSNS} and
FAGM(1,1,$k$) models.

\subsection{Raw data}
\label{subsec:raw data}

Raw data of the nuclear energy consumption of China were collected
from the report of the {\it BP Statistical Review of World Energy
2018}. The first 10 samples belonging to the $11^{\rm th}$ and the
$12^{\rm th}$ Five-Year Plans are applied to construct the
prediction model, while the remaining samples of the $13^{\rm th}$
Five-Year Plan are used to validate and compare the forecasting
results (see Table \ref{table:raw-data}).
\begin{table}[!htbp]
\caption{Raw data of nuclear energy consumption of China, Mtoe}
 \label{table:raw-data}
 \centering
 \scriptsize
 \begin{tabular}{cccccccccccccc}
 \hline
Year &Data &&&Year  &Data        &&&Year   &Data    \\
\hline
2006 &12.4 &&&2011  &19.5        &&&2016     &48.2       \\
2007 &14.1 &&&2012  &22.0        &&&2017     &56.2    \\
2008 &15.5 &&&2013  &25.3         \\
2009 &15.9 &&&2014  &30.0         \\
2010 &16.7 &&&2015  &38.6         \\
 \hline
\end{tabular}
\end{table}
\subsection{Simulation and prediction results}
\label{subsec:sim-pre-results}

The simulation and prediction results are listed in Table
\ref{table4:Predictionresults} and Fig. \ref{fig8:Comparison}, while
the errors are listed in Table \ref{table5:Predictionerror} and Fig.
\ref{fig9:error}.

The nuclear energy consumption of China from 2016 to 2017 is
predicted according to the established grey models. It can be observed
in Table \ref{table4:Predictionresults} and Fig.
\ref{fig8:Comparison} that five grey models, namely ENGM,
ONGM(1,1,$k$), FAGM(1,1), FAGM(1,1,$k$) and FAGMO(1,1,$k$),
successfully identify the trend of China's nuclear energy
consumption. However, these grey models differ from one another in terms of the
prediction values from 2016 to 2020. From Fig.
\ref{fig8:Comparison}, China's nuclear energy consumption is
overestimated by the ENGM, ONGM(1,1,$k$) and FAGM(1,1,$k$) models, and
underestimated by the FAGM(1,1) model. The values predicted by
FAGMO(1,1,$k$) are substantially closer to the raw data than those predicted by the other
models.

We can observe from Table \ref{table5:Predictionerror} and Fig.
\ref{fig9:error} that the RMSPEPR, RMSPEPO and RMSPE of
FAGMO(1,1,$k$) are 3.1409\%, 4.1502\% and 3.3304\%, respectively.
The RMSPEPR, RMSPEPO and RMSPE of ENGM are as high as 8.3788\%,
30.3663\% and 14.5667\%, those of ONGM(1,1,$k$) are 2.0494\%,
12.0510\% and 5.2635\%, those of FAGM(1,1) are 4.8680\%, 11.7968\%
and 6.5529\%, and those of FAGM(1,1,$k$) are 2.3299\%, 6.3828\% and
3.3636\%, respectively. The IA, AE and MAE of FAGMO(1,1,$k$) are
0.9985, 0.2526 and 0.7513, those of ENGM are 0.9538, 4.0536 and
4.0536, those of ONGM(1,1,$k$) are 0.9911, 1.0225 and 1.1896, those
of FAGM(1,1) are 0.9887, -1.1818 and 1.7105, and those of
FAGM(1,1,$k$) are 0.9971, 0.2736 and 0.8043, respectively.
The computational results indicate that the FAGMO(1,1,$k$) model
outperforms ENGM, ONGM(1,1,$k$), FAGM(1,1) and FAGM(1,1,$k$),
while ENGM exhibits the most inferior performance.

\begin{table}[!htbp]
\caption{Simulation and prediction results of nuclear energy
consumption by grey models}
 \label{table4:Predictionresults}
 \centering
 \scriptsize
 \begin{tabular}{cccccccccccccc}
 \hline
Year &Data &ENGM    &ONGM(1,1,$k$)  &FAGM(1,1)   &FAGM(1,1,$k$)   &FAGMO(1,1,$k$) \\
     &     &$r=1$   &$r=1$          &$r=1.4127$  &$r=1.0593$      &$r=1.1595$ \\
\hline
2006 &12.4 &12.4000  &12.4000        &12.4000     &12.4000     &12.4000  \\
2007 &14.1 &14.9788  &14.4788        &15.0242     &14.7054     &15.0891\\
2008 &15.5 &15.6744  &15.1057        &13.9808     &15.0121     &14.8608\\
2009 &15.9 &16.6846  &16.0032        &15.0566     &15.8012     &15.5886\\
2010 &16.7 &18.1520  &17.2884        &16.9219     &17.0700     &16.9760\\
2011 &19.5 &20.2831  &19.1286        &19.3953     &18.9344     &19.0534\\
2012 &22.0 &23.3785  &21.7635        &22.4687     &21.5861     &21.9432\\
2013 &25.3 &27.8741  &25.5363        &26.1951     &25.3029     &25.8633\\
2014 &30.0 &34.4036  &30.9383        &30.6625     &30.4740     &31.1013\\
2015 &38.6 &43.8871  &38.6732        &35.9872     &37.6390     &38.0473\\[5pt]
2016 &48.2 &57.6610  &49.7483        &42.3129     &47.5433     &47.2178\\
2017 &56.2 &77.6662  &65.6063        &49.8133     &61.2149     &59.2933\\
2018 & --  &106.7219 &88.3125        &58.6959     &80.0704     &75.1679\\
2019 & --  &148.9226 &120.8244       &69.2074     &106.0614    &96.0147\\
2020 & --  &210.2150 &167.3766       &81.6403     &141.8758    &123.3723\\
 \hline
\end{tabular}
\end{table}

\begin{figure}[!htbp]
\centering\centerline{\includegraphics[height=8cm,width=12cm]{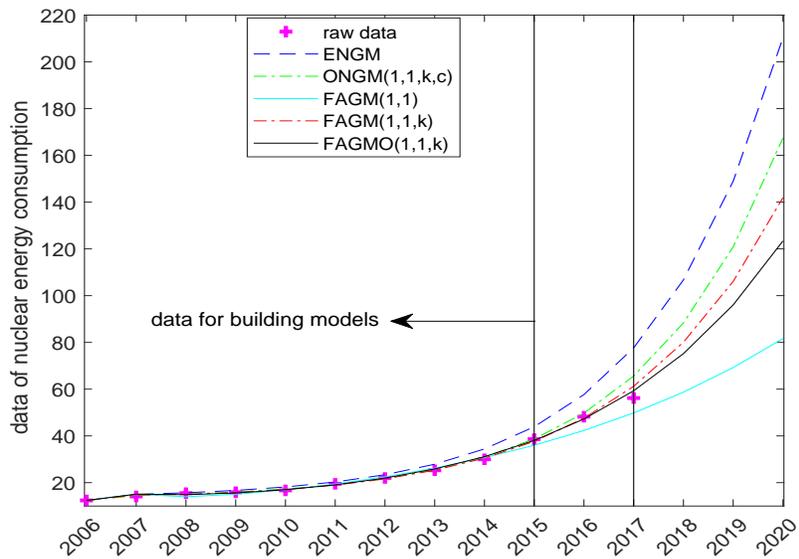}}
\caption{Comparison among five grey models for nuclear energy
consumption}
 \label{fig8:Comparison}
\end{figure}

\begin{table}[!htbp]
\caption{Relative error values of nuclear energy consumption by five
grey models}
 \label{table5:Predictionerror}
 \centering
 \scriptsize
 \begin{tabular}{cccccccccccccc}
 \hline
Year    &ENGM    &ONGM(1,1,$k$)  &FAGM(1,1)   &FAGM(1,1,$k$)  &FAGMO(1,1,$k$)\\
\hline
2006    &0       &0              &0           &0              &0        \\
2007    &0.0623  &0.0269         &0.0655      &0.0429         &0.0701   \\
2008    &0.0112  &0.0254         &0.0980      &0.0315         &0.0412   \\
2009    &0.0493  &0.0065         &0.0530      &0.0062         &0.0196   \\
2010    &0.0869  &0.0352         &0.0133      &0.0222         &0.0165   \\
2011    &0.0402  &0.0190         &0.0054      &0.0290         &0.0231   \\
2012    &0.0627  &0.0108         &0.0213      &0.0188         &0.0026   \\
2013    &0.1017  &0.0093         &0.0354      &0.0001         &0.0222   \\
2014    &0.1468  &0.0313         &0.0221      &0.0158         &0.0367   \\
2015    &0.1370  &0.0019         &0.0677      &0.0249         &0.0143   \\ [5pt]
2016    &0.1963  &0.0321         &0.1221      &0.0136         &0.0204   \\
2017    &0.3820  &0.1674         &0.1136      &0.0892         &0.0550   \\ [5pt]
RMSPEPR &8.3788\%&2.0494\%       &4.8680\%    &{\bf2.3299}\%  &3.1409\%   \\
RMSPEPO &30.3663\%&12.0510\%     &11.7968\%   &6.3828\%       &{\bf4.1502\%}   \\
RMSPE   &14.5667\%&5.2635\%      &6.5529\%    &3.3636\%       &{\bf3.3304\%}   \\
IA      &0.9538  &0.9911         &0.9887      &0.9971         &{\bf0.9985}   \\
AE      &4.0536  &1.0225         &-1.1818     &0.2736         &{\bf0.2526}   \\
MAE     &4.0536  &1.1896         &1.7105      &0.8043         &{\bf0.7513}   \\
\hline
\end{tabular}
\end{table}

\begin{figure}[!htbp]
\centering
\begin{minipage}[c]{0.5\textwidth}
\centering
\includegraphics[height=5.8cm,width=6.2cm]{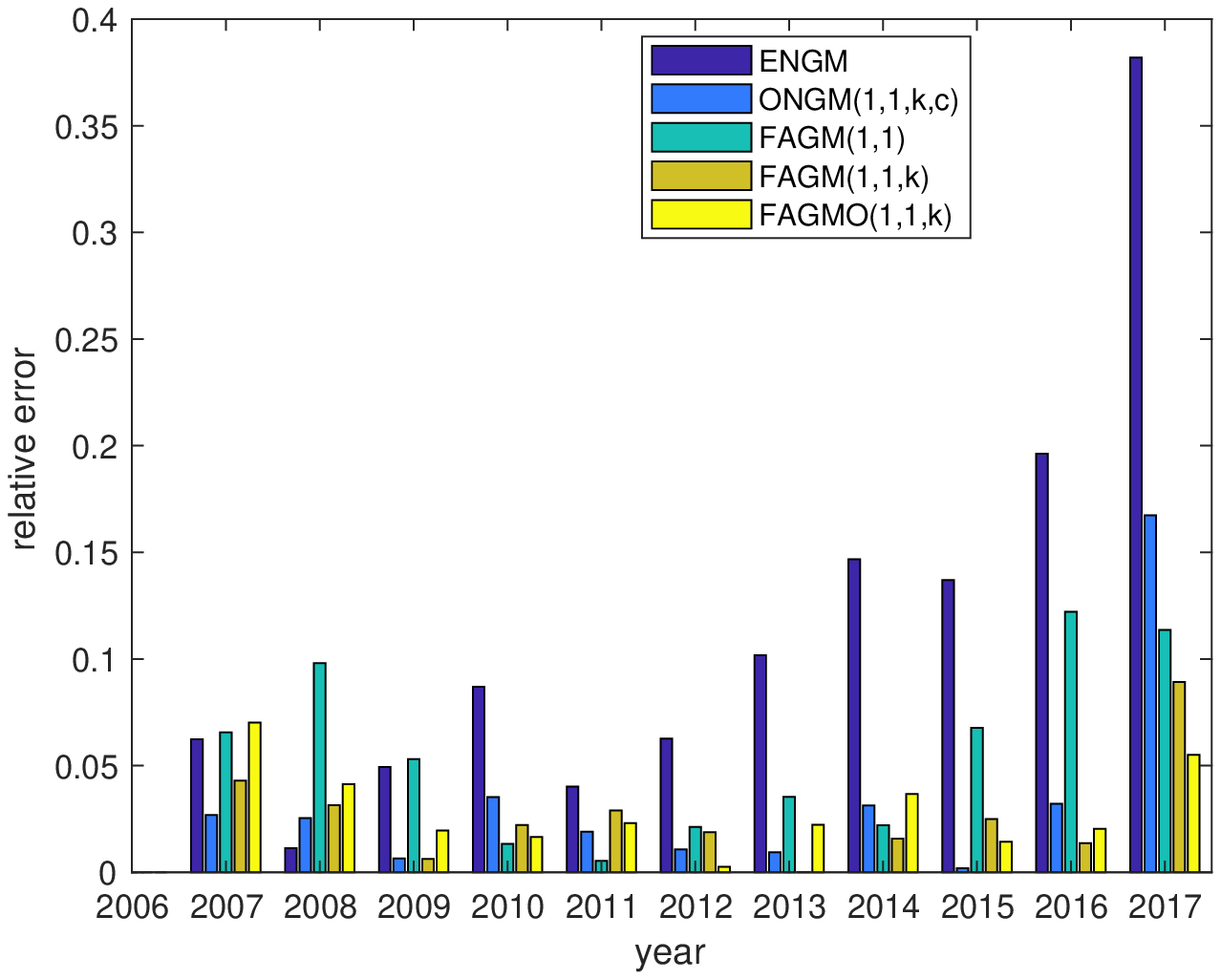}
\end{minipage}%
\begin{minipage}[c]{0.5\textwidth}
\centering
\includegraphics[height=5.8cm,width=6.2cm]{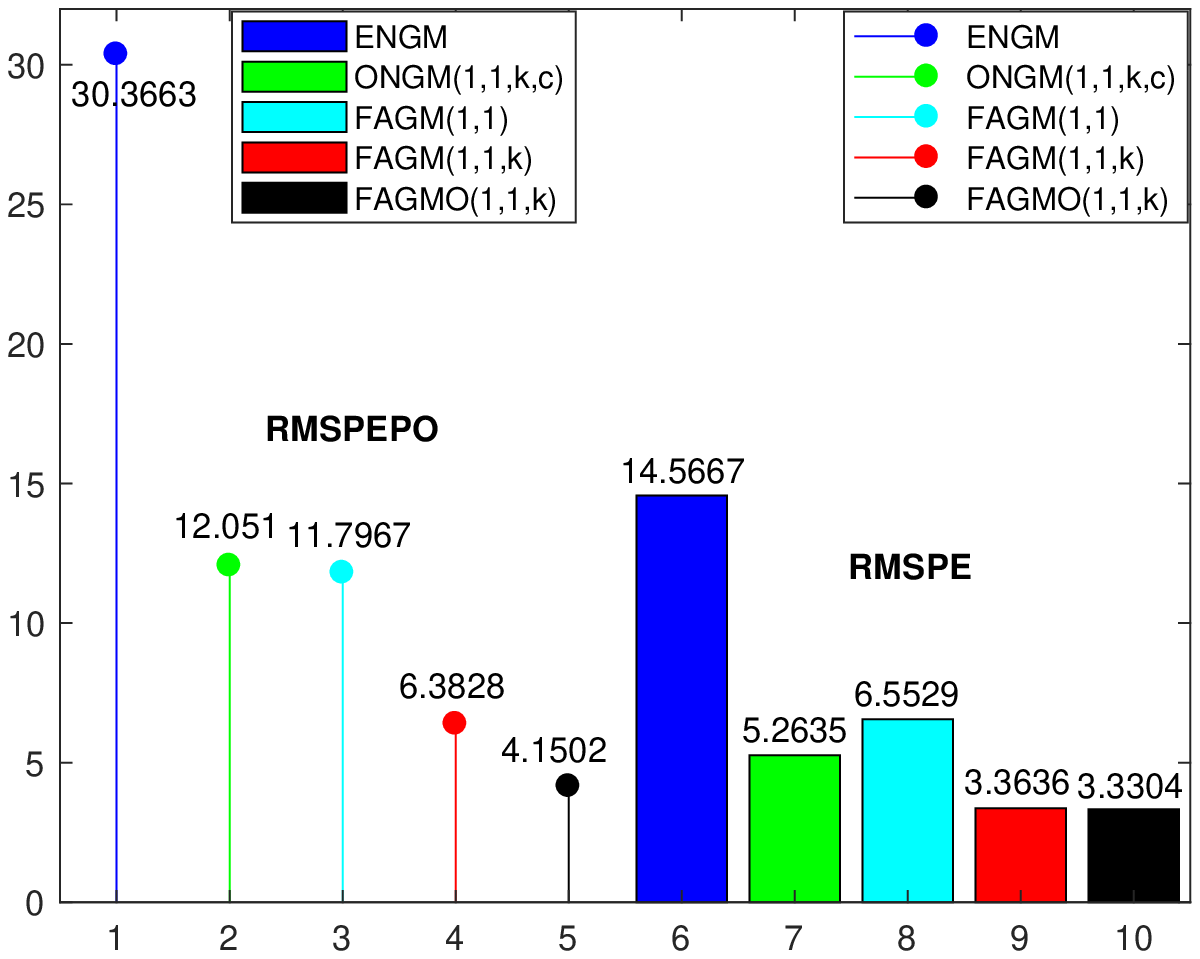}
\end{minipage}
\caption{Errors among five grey models for nuclear energy
consumption} \label{fig9:error}
\end{figure}
\newpage
\subsection{Further discussions}
\label{subsec:discussions}

As demonstrated by the case study, the novel FAGMO(1,1,$k$) model
outperforms other grey models. Moreover, it should be noted that in
this paper we only conduct short-term forecasting, while it is well
known that several existing energy models can perform long-term
forecasting, such as LEAP, TIMES and NEMS. We will discuss the
difference between our model and these models further, following a
very brief introduction to such models.

$\bullet$ LEAP (long-range energy alternatives planning system)
\cite{schnaars1987houLRRP, Dong2017APS} is a scenario-based energy
environment modelling tool for climate change mitigation and energy
policy analysis. It can be applied to examine energy production and
consumption, as well as resource extraction in all sectors. The model
studies the effects of various factors on energy consumption under
different scenarios given an objective. LEAP is generally used for
forecasting studies of between 20 and 50 years.

$\bullet$ TIMES (The Integrated MARKAL-EFOM System)
\cite{Shi2016modellingAE, Zhang2016timesAE} is an evolution of
MARKAL, which was developed by the Energy Technology Systems
Analysis Programme of the IEA. It
combines technical engineering and economic approaches,
and uses linear programming to produce a least-cost energy system
under numerous user-specified constraints. The software is used
to analyse energy, economic and environmental issues at different
levels over several decades.

$\bullet$ NEMS (National Energy Modeling System)
\cite{Gabriel2001theOR, Soroush2017aIJSSD} is a long-standing US
government policy model, which computes equilibrium fuel prices and
quantities for the US energy sector. NEMS is used to model the
demand side explicitly; in particular, to determine consumer
technology choices in the residential and commercial building
sectors.

These models can perform long-term energy consumption projections.
However, they may require a large amount of data, such as population
growth, GDP, urbanisation, energy policies and energy strategies. In
numerous practical situations, it is very difficult to obtain
complete information because of time and cost limitations. The grey
prediction model is an efficient method for conducting accurate
forecasting with at least four samples. Compared to the major energy
models, the grey model is an effective choice for predicting China's
nuclear energy consumption.

This paper collected 12 samples of China's nuclear energy
consumption from the {\it BP Statistical Review of World Energy
2018}. Thus, the LEAP, TIMES and NEMS models are all inapplicable
owing to poor information. By employing the grey system theory and
actual data from the $11^{\rm th}$ and the $12^{\rm th}$ Five-Year
Plans, the FAGMO(1,1,$k$) model was constructed. It can be observed
in Table \ref{table4:Predictionresults} that the prediction value of
FAGMO(1,1,$k$) is 123.3723 Mtoe in 2020, which is larger than the
84.6318 Mtoe provided in the {\it BP energy outlook 2018}. The main
reasons for this are as follows.

{\romannumeral 1)} It is infeasible to consider factors such as
energy policies and China's energy strategies, which affect the
current situation of China's nuclear energy consumption, in our
proposed model because the FAGMO(1,1,$k$) model is univariate.
However, the forecasting models of institutions including BP, the
IEA and APEC are based on widely collected data. Furthermore, grey
models are mainly used for short-term forecasting in the calculation
process, such as \cite{Feng2012ForecastingESPBEPP, Cui2013AAPM,
duan2018forecasting, Wu2018usingE, Chen2008ForecastingCNSNS}.
Therefore, the forecasting results are relatively acceptable,
reflecting the growth trend of future nuclear energy consumption in
China.

{\romannumeral 2)} In China's nuclear energy market, 38 nuclear
power reactors are in operation, 19 nuclear power reactors are under
construction and more are to be constructed by the end of 2016. This is
the reason for the increase in nuclear energy consumption in recent years. However, no
new nuclear projects have been approved for construction in 2016.
Moreover, the State Council approved new safety rules and a nuclear
power development plan following Japan's Fukushima Daiichi crisis in
2011. These factors have also resulted in a slight slowdown in
China's nuclear energy consumption.

In the future, nuclear energy could provide an important alternative
to fossil fuels such as coal and oil, and its proportion of the
total primary energy consumption will increase yearly. Based on
our forecasting results using the FAGMO(1,1,$k$) model, the future
nuclear energy consumption of China will increase rapidly if no certain
restrictions are placed thereon. This implies that higher management and
technical levels are necessary to meet the safety and quality requirements.
Therefore, China's government and policy makers
should pay additional attention to the safety and quality issues of
nuclear energy to achieve long-term, environmentally
friendly and low-carbon energy goals and lay the foundation for
the sustainable development of China's energy and economy.

%
%

\section{Conclusions}
\label{sec:conclu}

By applying the grey modelling technique and parameter optimisation
method, the fractional FAGMO(1,1,$k$) model was proposed to predict
China's nuclear energy consumption of the $13^{\rm th}$ Five-Year
Plan, based on the updated data from 2006 to 2015. The forecasting
results provide the growth trend of the future nuclear energy
consumption of China, and also offer a guideline for policymaking
and project planning.

It can be observed that FAGMO$(1,1,k)$ is quite easy to use, with
satisfactory accuracy in short-term nuclear consumption forecasting.
For long-term prediction, its error will be larger because only 10
samples are used for modelling. This study is expected to be able to
forecast the energy consumption of other countries that share
similar patterns of economic development and energy consumption
structures, among others. Furthermore, the optimised method applied
to improve the FAGM(1,1,$k$) model can be used to improve other
first-order grey models, such as NGBM(1,1), GMC(1,$n$) and
RDGM(1,$n$). These are possible extensions and suggested directions
for our future research.
\section*{Acknowledgments}

This research was supported by the National Natural Science
Foundation of China (No. 71771033), Longshan academic talent
research supporting program of SWUST (No. 17LZXY20), Doctoral
Research Foundation of Southwest University of Science and
Technology (No. 15zx7141, 16zx7140) and the Open Fund (PLN201710)
of the State Key Laboratory of Oil and Gas Reservoir Geology and
Exploitation (Southwest Petroleum University).

\section*{References}

\bibliography{greybib}

\end{document}